\documentclass[10pt]{article}

% Imports packages
\usepackage[margin=1in,includefoot,footskip=30pt]{geometry} % Uses full page
\usepackage{standalone} 							% Need standalone package
\usepackage[backend=bibtex,style=nature]{biblatex} 	% For sexy bibliographies
\usepackage[colorlinks,bookmarks=false]{hyperref} 	% Clickable links in pdf
\usepackage{enumerate} 								% For enumerated lists
\usepackage{enumitem} 						% For better enumerated lists
\usepackage{amsfonts} 						% Contains maths fonts
\usepackage{braket} 						% Dirac notation
\usepackage{amssymb} 						% Math symbols
\usepackage{amsmath} 						% Math environments
\usepackage{amsthm} 						% Theorem environments
\usepackage{mathtools} 						% Arrows with words
\usepackage{authblk} 						% Author affiliations
\usepackage{framed} 						% Frames for things like boxes around pseudocode 
\usepackage{varwidth,float} 				% Environment for varwidth environment for images next to equations
\usepackage{etoolbox} 						% Needed for \pretocmd for minipage
\usepackage[table,xcdraw,svgnames]{xcolor} 	% For colored links, etc.
\usepackage{float} 							% Custom float labels, such as Box, etc.
\usepackage[titletoc,toc,title]{appendix}
\usepackage{afterpage}
\usepackage[ruled,vlined,noend]{algorithm2e}	%Adds pseudocode environment

% Adds theorem environments
\newtheorem{theorem}{Theorem}[section]
\newtheorem{lemma}[theorem]{Lemma}
\newtheorem{remark}[theorem]{Remark}

% Adds new float enviroments
\newfloat{box_fig}{htbp}{box_fig}\floatname{box_fig}{Box}

% Changes link colours
\AtBeginDocument{%
  \hypersetup{
    citecolor=NavyBlue,
    linkcolor=Red,   
    urlcolor=Purple}}
    
\bibliography{SPF_library}

% Removes unneeded bibliography environments and hyperlinks titles 
\ExecuteBibliographyOptions{doi=false,url=false,isbn=false}
\newbibmacro{string+doi}[1]{%
  	\iffieldundef{doi}{
  		\iffieldundef{url}{#1}{\href{\thefield{url}}{#1}}
	}{\href{http://dx.doi.org/\thefield{doi}}{#1}}}
\DeclareFieldFormat{title}{\usebibmacro{string+doi}{\mkbibemph{#1}}}
\DeclareFieldFormat[article]{title}{\usebibmacro{string+doi}{\mkbibquote{#1}}}

% Defines minipage environment
\makeatletter
\pretocmd\start@align{%
  \if@minipage\kern-\topskip\kern-\abovedisplayskip\fi
}{}{}
\makeatother

% Defines maths commands

\newcommand{\qsupp}[1]{\mathcal{Q}(#1)}

\newcommand{\acomm}[2]{\{#1, #2\}}
\newcommand{\comm}[2]{\big[#1, #2\big]}
\newcommand{\abs}[1]{\mathopen|#1\mathclose|}

\newcommand{\gen}[1]{\left\langle#1\right\rangle}
\newcommand{\id}{\mathbb{I}}
\newcommand{\eval}[1]{{\left\langle#1\right\rangle}}
\newcommand{\CZ}[0]{\mathrm{CZ}}
\newcommand{\red}[1]{{\color{red} #1 }}

\DeclarePairedDelimiter\floor{\lfloor}{\rfloor}

\begin{document}

\title{Loss-tolerant teleportation on large stabilizer states}

\date{}

\author[1]{Sam Morley-Short\thanks{\href{sam.morley-short@bristol.ac.uk}{sam.morley-short@bristol.ac.uk}}}
\author[1,2,3]{Mercedes Gimeno-Segovia}
\author[2]{Terry Rudolph}
\author[1]{Hugo Cable}

\renewcommand\Authfont{\fontsize{12}{14.4}\selectfont}
\renewcommand\Affilfont{\fontsize{9}{10.8}\itshape}

\affil[1]{Quantum Engineering Technology Labs, H.\ H.\ Wills Physics Laboratory and Department of Electrical and Electronic Engineering, University of Bristol, BS8 1FD, UK}
\affil[2]{Department of Physics, Imperial College London, London SW7 2AZ, UK}
\affil[3]{Institute for Quantum Science and Technology, University of Calgary, Alberta T2N 1N4, Canada}

\maketitle
\begin{abstract}
We present a general method for finding loss-tolerant teleportation on large, entangled stabilizer states using only single-qubit measurements, known as \emph{stabilizer pathfinding} (SPF).
For heralded loss, SPF is shown to generate optimally loss-tolerant measurement patterns on any given stabilizer state.
Furthermore, SPF also provides highly loss-tolerant teleportation strategies when qubit loss is unheralded.
We provide a fast algorithm for SPF that updates continuously as a state is generated and measured, which is therefore suitable for real-time implementation on a quantum-computing device.
When compared to simulations of previous heuristics for loss-tolerant teleportation on graph states, SPF provides considerable gains in tolerance to both heralded and unheralded loss, achieving a near-perfect teleportation rate ($> 95\%$) in the regime of low qubit loss ($< 10\%$) on various graph state lattices.
Using these results we also present evidence that points towards the existence of loss-tolerant thresholds on such states, which in turn indicates that the loss-tolerant behaviour we have found also applies as the number of qubits tends to infinity.
Our results represent a significant advance towards the realistic implementation of teleportation in both large-scale and near-future quantum architectures that are susceptible to qubit loss, such as linear optical quantum computation and quantum communication networks.

%mention: optimal heralded loss tolerance, unprecedented unheralded loss tolerance, heralded loss thresholds, pauli error detection.
\end{abstract}

\section{Introduction} \label{sec:introduction}

%\paragraph{Teleportation on large cluster states is important for QC.\todo{add review citations for broad fields such as QEC, LOQC, graph states, etc.}} \

Many new quantum technologies demand the teleportation of quantum states across large, multiparty entangled states \cite{Pirandola2015,Kimble2008,Epping2016,Simon2017,Das2018,Gimeno-Segovia2015}.
A common example is provided by measurement-based quantum computation (MBQC) \cite{Raussendorf2001,Raussendorf2003}, which uses single-qubit measurements on cluster states and feed-forward of measurement outcomes to implement universal quantum computation.
Teleportation steps are used extensively in MBQC, whether following the original proposal \cite{Raussendorf2001} or generalisations using alternative entangled resource states \cite{Gross2010}.
In practise, any protocol for quantum computation (or related applications such as in quantum communications \cite{Azuma2015}) must also tolerate qubit dephasing and loss.
While the primary source of error for many quantum computing platforms is qubit dephasing, loss errors are known to dominate in architectures such as linear optical quantum computation (LOQC) \cite{Knill2001,Kok2007,Gimeno-Segovia2015,Li2015}.
Currently, the main approach to mitigating significant degrees of loss are quantum error correcting codes (QECC) \cite{Stace2009}, loss-tolerant qubit encodings \cite{Varnava2006,Varnava2007,Varnava2008}, or some other process imposing additional resource costs, such as the proposal of \cite{Campbell2008} which enables photon loss to be converted into a linear time cost, providing successful quantum gates within a modular light-matter based architecture.

In this work we present a new method for teleportation that exploits the correlations of large, entangled stabilizer states using only single-qubit measurements, known as \emph{stabilizer pathfinding} (SPF).
For heralded loss, we show that SPF provides optimally loss-tolerant measurement patterns for all stabilizer states, as well as tolerance of unheralded qubit loss.
To implement SPF in a realistic setting, we also provide an algorithm that can generate SPF measurement patterns with low computational overhead based on applying minimal updates during states generation and measurement.

When compared to simulations of previous heuristics for teleportation on quantum graph states, SPF provides significant gains in loss tolerance for both the heralded and unheralded case.
For example, when applied to the square-lattice graph states (i.e.\ cluster states) commonly used for MBQC, we find that SPF achieves a teleportation rate of $T \approx 98\%$ for $10\%$ heralded qubit loss, compared to $T \approx 40\%$ using previous teleportation techniques based on localisable entanglement \cite{Raussendorf2001,Hein2006}.
When the loss is unheralded on the same state, SPF measurement strategies also achieve at least $T \approx 84\%$---where there was no previously-known method for achieving loss tolerance for teleportation.

We also provide evidence of critical loss-tolerant thresholds on a variety of graph state lattices.
These would show that loss-tolerant teleportation can be achieved in the limit of infinite lattice size, with existence of loss-tolerant measurement patterns guaranteed below some threshold loss rate.
Our results provide an optimistic outlook on the reduction of loss rates in quantum computation and communication architectures as well as ensuring optimal use of intermediately-sized states generated by near-term devices.

%\paragraph{Outline structure of paper.} \

The paper is structured as follows.
Section \ref{sec:motivation} motivates our work by considering the task of teleportation on stabilizer states and presents previous approaches to achieving loss tolerance.
The stabilizer pathfinding approach to teleportation is then presented in section \ref{sec:SPF} which outlines an algorithm for it's computation.
Our main results are given in section \ref{sec:loss_tol} which provides numerical simulations to highlight SPF's improved loss tolerance in the case of both heralded and unheralded loss.
Section \ref{sec:discussion} then discusses SPF's algorithmic efficiency and it's implications for LOQC and other quantum technology platforms.
Finally, section \ref{sec:conclusion} summarises the work and suggests a selection of avenues for further research.

\section{Background and motivation} \label{sec:motivation}

%\paragraph{Intro para} \

We now present a short introduction to teleportation on stabilizer states followed by an example to motivate the need for a general approach for finding teleportation measurement patterns.
In what follows we will assume familiarity with the standard definitions on the stabilizer formalism, graph states and MBQC and refer the reader to \cite{Gottesman1997,Hein2004,Raussendorf2003} for more details.
Also given the equivalence between stabilizer and graph states \cite{VandenNest2004,Anders2006}, we shall only consider graph states here but note that the following applies to stabilizer states.
An introduction to stabilizer formalism is also provided in Appendix \ref{apx:stab_theory}.

\subsection{Teleportation on stabilizer states} \label{sec:motivation:teleportation}

%\paragraph{Initial definitions.} \

Consider an arbitrary quantum state $\ket{\psi}$ on \emph{input} qubit $I$ with logical operators $\bar{X}_\psi = X_I, \bar{Z}_\psi = Z_I$.
Now consider the entangling of $\ket{\psi}$ with $n$ other qubits in some graph state such that the resultant state $\ket{\Psi}$ is now defined by a pair of logical operators $\bar{X}_\Psi, \bar{Z}_\Psi$ and stabilizer generators $\mathcal{G}_\Psi = \{K_i\}^n_{i=1}$ that form the closed group $\mathcal{S}_\Psi = \gen{\mathcal{G}_\Psi}$ of all stabilizers of $\ket{\Psi}$ under multiplication.
Teleportation on $\ket{\Psi}$ aims to find some set of single-qubit measurements or \emph{measurement pattern} $M$ that recovers $\ket{\psi}$ on some \emph{output} qubit $O$, or equivalently, that produce two anti-commuting logical operators acting only on $O$.
Qubits not measured by any element of $M$ can then be lost without impeding teleportation, such that maximal loss tolerance is achieved by minimising $\abs{M}$.
Hence, the set of all teleportation protocols which can tolerate some amount of loss can be known by finding all $M$ that omit at least one qubit.

%\paragraph{How to find measurement patterns.} \

We now present a general method for finding valid $M$ on $\ket{\Psi}$.
First, recall that any product of the logical operator and stabilizer is also a logical operator on $\ket{\Psi}$, thereby defining the set of all logical operators $\mathcal{L}_\Psi = \gen{\bar{X}_\Psi, \bar{Z}_\Psi} \times \mathcal{S}_\Psi$.
Given a pair of logical operators $\bar{X}, \bar{Z} \in \mathcal{L}_\Psi$ such that
\begin{align}
	\acomm{\bar{X}^{[O]}}{\bar{Z}^{[O]}} = 0, \quad\textrm{and}\quad \comm{\bar{X}^{[a]}}{\bar{Z}^{[a]}} = 0 \;\;\forall\; a \neq O, \label{eq:motivation:log_op_pairs}
\end{align}
where $A^{[i]}$ denotes the Pauli operator of $A$ acting on qubit $i$, then it is easy to see that the single-qubit measurement of all $\bar{X}^{[a]}, \bar{Z}^{[a]} \neq \id$ will achieve teleportation onto $O$.
Specifically, the measurement pattern produced by the pair of logical operators $\bar{X}$ and $\bar{Z}$ is given by
\begin{align}
	M_{\bar{X}, \bar{Z}} = \{\bar{X}^{[i]}: \,\bar{X}^{[i]} \neq \id, \;\forall\; i \neq O\} \cup \{\bar{Z}^{[i]}: \,\bar{Z}^{[i]} \neq \id, \;\forall\; i \neq O\},
\end{align}
which has \emph{weight} $w = \abs{M_{\bar{X}, \bar{Z}}}$.
The set of all valid measurement patterns $\mathcal{M} = \{M_{\bar{X}, \bar{Z}}\}$ is then given by finding all logical operator pairings satisfying equation \eqref{eq:motivation:log_op_pairs}.
Given the equivalence between states' logical operators and stabilizers, we refer to this method for teleportation as \emph{stabilizer pathfinding} (SPF).
From the above requirements we define the \emph{stabilizer pathfinding conditions}, which are summarised in box \ref{box:spf_conds}.

\begin{box_fig}[t]
	\begin{framed}
		{\bf STABILIZER PATHFINDING CONDITIONS:} \vspace{4pt}
		
		Consider the state $\ket{\Psi}$ defined by logical operators $\mathcal{L}_\Psi$ that encodes a single logical qubit state $\ket{\psi}$.
		A valid measurement pattern that recovers $\ket{\psi}$ on qubit $O$ of $\ket{\Psi}$ can be found from any pair of logical operators $\bar{X}, \bar{Z} \in \mathcal{L}_\Psi$ that:
		\vspace{-2pt}
		\begin{enumerate}[label=\alph*)]
			\itemsep0em 
			\item \emph{anticommute} on qubit $O$, and
			\item \emph{commute} on each qubit which is not $O$.
		\end{enumerate}
		Given these conditions are satisfied, teleportation is achieved by performing the set of single-qubit measurements represented by each non-identity Pauli operator of $\bar{X}, \bar{Z}$ on all qubits other than $O$.
	\end{framed}
	\vspace{-7pt}
	\caption{Conditions any pair of logical operators must satisfy to provide a teleportation measurement pattern.}
	\label{box:spf_conds}
\end{box_fig}

%\paragraph{Define graph pathfinding.} \

%TODO: cite Proposition 16 (Localizable entanglement). of "ENTANGLEMENT IN GRAPH STATES AND ITS APPLICATIONS"
Given the significant number of $\bar{X}, \bar{Z}$ pairs for large states, measurement patterns are often found from heuristic methods.
The most common heuristic for finding a subset of $\mathcal{M}$ on graph states is a technique we shall refer to as \emph{graph pathfinding} (GPF), originally proposed for teleportation in MBQC and producing localizable entanglement \cite{Raussendorf2001,Hein2006}.
As used by MBQC on graph states, this approach requires finding a path $P = \{I, \ldots, O\}$ between qubits $I$ and $O$ and $P$'s graph neighbourhood $\Pi$ (all qubits that neighbour a qubit in $P$ that are not themselves in $P$), on which single-qubit $X$ and $Z$ measurements are performed respectively.
Finding $M$ for loss-tolerant teleportation is thus achieved by minimising $\abs{P \cup \Pi}$.
The graph pathfinding heuristic is usually understood by observing that teleportation occurs from $X$ measurements along a linear graph state between $I$ and $O$ produced from the $Z$ measurements.

%\paragraph{How to understand GPF via SPF} \
Equally, by recalling that graph state's generators are given by $K_i = X_i \bigotimes_{j \in N_G(i)} Z_j \;\;\forall\;\; i=1,\ldots,n$ (where $N_G(i)$ is the neighbourhood of $i$ on graph $G$), it is easy to see why such a technique works through the lens of stabilizer pathfinding.
Specifically, given $P$ there are two always logical operators $\bar{X}$, $\bar{Z}$ with $X$ operators at odd and even positions along $P$ respectively, either terminating with $Z_O$ for $\bar{X}$ when $\abs{P}$ is odd or for $\bar{Z}$ when $\abs{P}$ is even, with $Z$ operators on qubits in $\Pi$.
When paired such logical operators then give the usual $M$ for graph pathfinding.

\subsection{Limitations of graph pathfinding} \label{sec:motivation:limitations_of_gpf}

%\paragraph{Graph pathfinding on three-ways example.} \

We now present a motivating example for the relevance of stabilizer pathfinding to loss-tolerant teleportation.
Consider the state $\ket{\Psi}$, depicted below:
\begin{figure}[H]
	\begin{center}
		\begin{varwidth}{0.35\linewidth}
			\begin{center}	
				\begin{figure}[H]
				    \includegraphics[width=\textwidth]{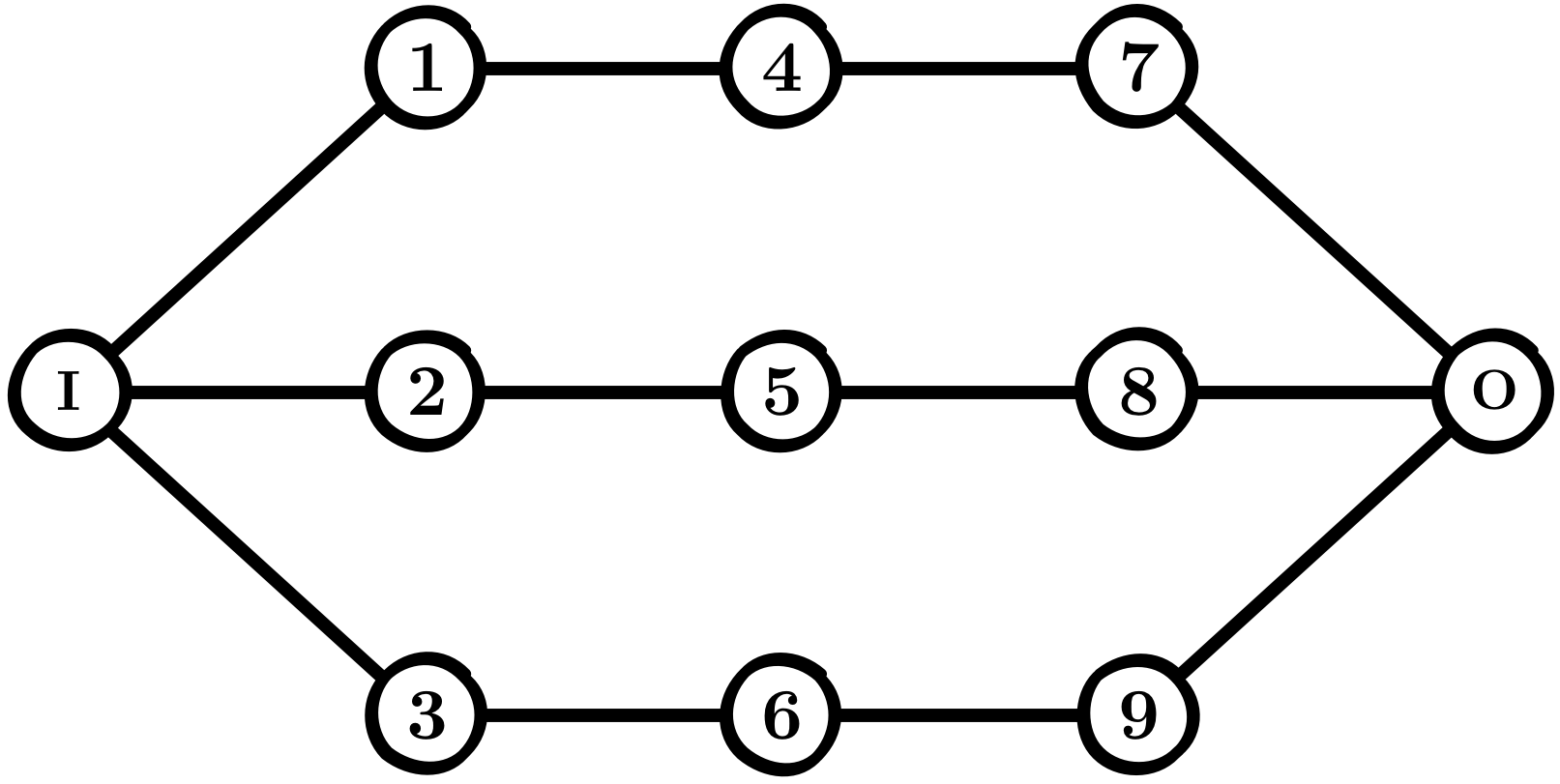}
				\end{figure}
			\end{center}
		\end{varwidth}%
		\hspace{20pt}
		\begin{minipage}{0.45\linewidth}
			\vspace{-10pt}
			\begin{align}	
				\arraycolsep=1.4pt
				\begin{array}{rlll}
					\bar{X}_\Psi = &\;\, Z_I \\
					\bar{Z}_\Psi = &\;\, X_IZ_1Z_2Z_3 \\
			\mathcal{G}_{\Psi} = &\{K_1 = Z_IX_1Z_4, & K_2 = Z_IX_2Z_5, & K_3 = Z_IX_3Z_6, \\
						 	         &\;\, K_4 = Z_1X_4Z_7, & K_5 = Z_2X_5Z_8, & K_6 = Z_3X_6Z_9, \\
		  					         &\;\, K_7 = Z_4X_7Z_O, & K_8 = Z_5X_8Z_O, & K_9 = Z_6X_9Z_O, \\
						   	         & \multicolumn{2}{l}{\;\,K_O = Z_7Z_8Z_9X_O \}}
				\end{array} \notag
			\end{align}
		\end{minipage}
		\vspace{-22pt}
	\end{center}
	\vspace{-5pt}
	\caption{An example graph state on which teleportation is to be performed from input qubit $I$ to output qubit $O$, defined by logical operators $\bar{X}_\Psi, \bar{Z}_\Psi$ and stabilizer generators $\mathcal{G}_\Psi$.}
	\label{fig:three_ways_1}
\end{figure}

On the above example, graph pathfinding clearly provides only three measurement patterns, such as $P = \{I, 2, 5, 8, O\}$, and thus provides tolerance to the loss of at most (but not any) two qubits, such as $\{4, 6\}$, with the associated $M$ depicted in figure \ref{fig:both_multi_paths}a).
Furthermore, since each $M$ associated with a path contains anticommuting measurements on at least one qubit, there is little-to-no ability to switch between them in the case of unheralded loss.

%\paragraph{Example of better set of measurement patterns.} \

Now consider an alternative set of three measurement patterns provided by stabilizer pathfinding:
\begin{align*}
	\bar{X} = K_1 K_7\bar{X}_\Psi = X_1 X_7 Z_O, \; \bar{Z} = K_4 K_5 K_6 K_O\bar{Z}_\Psi = X_I X_4 X_5 X_6 X_O \;\Rightarrow\; M_1 &= \{X_I, X_4, X_5, X_6, X_1, X_7\}\\
	\bar{X} = K_2 K_8\bar{X}_\Psi = X_2 X_8 Z_O, \; \bar{Z} = K_4 K_5 K_6 K_O\bar{Z}_\Psi = X_I X_4 X_5 X_6 X_O \;\Rightarrow\; M_2 &= \{X_I, X_4, X_5, X_6, X_2, X_8\}\\
	\bar{X} = K_3 K_9\bar{X}_\Psi = X_3 X_9 Z_O, \; \bar{Z} = K_4 K_5 K_6 K_O\bar{Z}_\Psi = X_I X_4 X_5 X_6 X_O \;\Rightarrow\; M_3 &= \{X_I, X_4, X_5, X_6, X_3, X_9\}
\end{align*}
as depicted in figure \ref{fig:both_multi_paths}b).
There are two key differences between these $M$ and those provided by graph pathfinding.
Firstly, each $M$ can tolerate twice the amount of lost qubits, equating to a four-fold increase in the number qubit loss configurations tolerable.
Secondly, since no two patterns require contradictory measurements on any qubit, the attempt of one pattern does not preclude the later attempt of another.
Although the latter difference is irrelevant in the case of heralded qubit loss, this fact crucially allows tolerance of unheralded loss events.
For example, consider we begin a teleportation protocol by the successful measurement of $X_I, X_4, X_5$, and $X_6$, leaving three possible sets of measurements: $\{X_1, X_7\}$, $\{X_2, X_8\}$, and $\{X_3, X_9\}$.
Since only one pair must succeed, any loss on up to two pairs can be tolerated as long as one is completed\footnote
{
	In the case that there is no additional cost to extraneous measurements, each pair can be measured simultaneously.
}.
This can also be seen by noting that if any pair is successfully measured, any remaining (and potentially lost) qubits are disentangled from the final state on qubit $O$.

%\paragraph{Why SPF needed (Outro)} \

From the above it is clear the measurement patterns provided by graph pathfinding represent only a small fraction of all $M \in \mathcal{M}$.
For example, when stabilizer pathfinding is applied on the previous state we find $\abs{\mathcal{M}} = 2657$, allowing 60 different combinations of lost qubits, with at most four qubits left unmeasured.
However, finding the set $\mathcal{M}$ through an exhaustive search is impractical for large states in general.
Furthermore, many, if not the majority of $M \in \mathcal{M}$ will not tolerate any qubit loss.
In order to overcome this challenge, we shall now present algorithm that finds all maximally lost-tolerant\footnote
{
	Here \emph{maximally lost-tolerant} refers to the fact that our algorithm will return measurement patterns in descending loss tolerance, finding those measurement patterns that are tolerant to the greatest number of qubits first.
}
 measurement patterns without any exhaustive searches.

\section{Stabilizer pathfinding} \label{sec:SPF}

%\paragraph{State goal of stabilizer pathfinding.} \

Given that $O(2^{2n})$ possible pairs of logical operators exist for a state with $n$ generators, computing $\mathcal{M}$ by brute force is clearly impractical for even modestly sized states\footnote
{
	There are similarly $O(4^n)$ possible Pauli measurement patterns on $n$ qubits, providing an equally impractical computation.
}.
The most practical aspect of our work is an algorithm that implements stabilizer pathfinding to find loss-tolerant measurement patterns without the need for exhaustive searches.

%\paragraph{Algorithm introduction} \

Functionally our algorithm is divided into two distinct subroutines: i) finding all stabilizers of the state that are relevant for teleportation, and ii) finding all pairs of logical operators that produce maximally loss-tolerant measurement patterns.
In this section we provide an outline of each routine's challenges and our solutions, with full technical details found in Appendix\ \ref{apx:SPF}, including a full pseudocode description in algorithm \ref{alg:SPF}.
Readers primarily concerned with the degree of loss tolerance afforded by stabilizer pathfinding are directed to section \ref{sec:loss_tol}.

\subsection{Which stabilizers are relevant for teleportation?} \label{sec:SPF:relevant_stabs}

% \paragraph{Introduce task that we'll need update rules for} \
 
To prevent the need to store and update all $2^n$ stabilizers, we now consider which of a state's stabilizers are relevant to teleportation.
This will allow the identification of the subset of stabilizers that must be tracked for stabilizer pathfinding.

\subsubsection{Logical operators as combinations of stabilizer generators} \label{sec:SPF:relevant_stabs:stab_combos}

%\paragraph{Introduce concept of stabilizer combination.} \

Consider an arbitrary state $\ket{\Psi}$ with stabilizers $\mathcal{S}_\Psi = \gen{\mathcal{G}_\Psi}$, where $\mathcal{G}_\Psi = \left\{K_i\right\}_{i=1}^n$.
Given that $\mathcal{S}_{\Psi}$ form a closed group under multiplication, we can label each stabilizer $S_c \in \mathcal{S}_{\Psi}$ by the set of generator indices $c$ from which it is produced, such that
\begin{equation}
	S_c = \prod_{i \in c}K_i\;.
\end{equation}
We shall refer to $c$ as the stabilizer's generator \emph{combination}, by which it is uniquely
 defined (given a fixed $\mathcal{G}_\Psi$).

\begin{figure}[t]
	\centering
	\vspace{0pt}
	\includegraphics[width=0.7\textwidth]{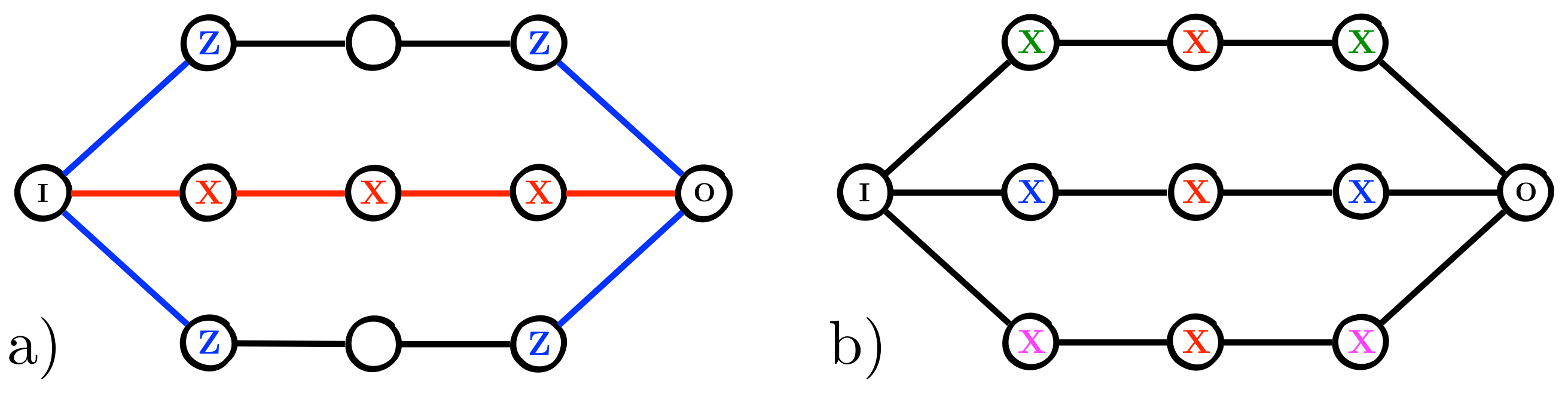}
	\caption{% Both multi-paths
		Possible measurement patterns for teleportation between qubits $I$ and $O$ provided by a) the graph pathfinding heuristic and b) our generalised stabilizer pathfinding (where measurement $X_I$ is also needed in both cases).
		In a), all measurements must be successfully completed for teleportation, providing a loss tolerance to the two unlabelled qubits (with the associated path and neighbourhood highlighted in red and blue respectively).
		In b), if the centre column of qubits are successfully measured then teleportation is completed by the successful measurement of both qubits in any of the three pairs of the same colour.
		We note that stabilizer pathfinding also returns all graph pathfinding measurement patterns and so may still achieve teleportation even if at most two out of three central column (red) qubits are lost.
		Not only does the latter case provide additional qubit loss tolerance, but also tolerance to loss events that are only heralded at the point of measurement (i.e.\ unheralded loss).}
	\label{fig:both_multi_paths}
	\vspace{-5pt}
\end{figure}

%\paragraph{Example of difference between different stabilizers.} \

However, not all stabilizers are equally useful for the task of producing teleportation measurement patterns.
To see this, consider applying stabilizer pathfinding for teleportation from $I$ to $O$ on linear graph state $\ket{\Psi}$ depicted below:
\begin{center}
\begin{varwidth}{0.35\linewidth}
	\begin{center}	
		\begin{figure}[H]
			\vspace{20pt}
		    \includegraphics[width=\textwidth]{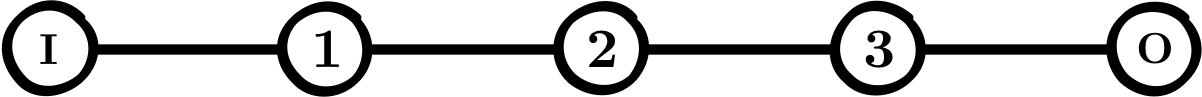}
		\end{figure}
	\end{center}
\end{varwidth}%
\hspace{0pt}
\begin{minipage}{0.45\linewidth}
	\vspace{0pt}
	\begin{align}	
		\arraycolsep=0.7pt
		\begin{array}{rlllllllc}
			\bar{X}_\Psi = &     & Z_I &     &     &     &     &    & \\
			\bar{Z}_\Psi = &     & X_I & Z_1 &     &     &     &    & \\
		\mathcal{G}_\Psi = & \{  & Z_I & X_1 & Z_2 &     &     & ,  & \; (K_1) \\ 
			   			   &     &     & Z_1 & X_2 & Z_3 &     & ,  & \; (K_2) \\ 
					       &     &     &     & Z_2 & X_3 & Z_O & ,  & \; (K_3) \\
	 	       			   &     &     &     &     & Z_3 & X_O & \} & \; (K_O)
		\end{array} \notag
	\end{align}
\end{minipage}
%\vspace{-22pt}
\end{center}

Firstly consider the stabilizer $S_{\{1,3\}} = K_1K_3 = Z_IX_1X_3Z_O$, used to define the logical operator $\bar{X}_{\{1,3\}} = S_{\{1,3\}} \bar{X}_\Psi = X_1X_3Z_O$.
This choice of stabilizer allows $\bar{X}_{\{1,3\}}$ to be paired with some $\bar{Z}$ that obeys the stabilizer pathfinding conditions for output qubit $O$.
Specifically, $\bar{Z}_{\{2,O\}} = S_{\{2,O\}} \bar{Z}_\Psi = X_I X_2 X_O$ satisfies equation \eqref{eq:motivation:log_op_pairs} with $M = \{X_I, X_1, X_2, X_3\}$, in this case reproducing the measurement pattern provided by graph-pathfinding.

Now consider the stabilizer $S_{\{1,O\}} = K_1 K_O = Z_I X_1 Z_2 Z_3 X_O$, used to define the logical operator $\bar{X}_{\{1,O\}} = S_{\{1,O\}} \bar{X}_\Psi = X_1 Z_2 Z_3 X_O$.
In this case $\bar{X}_{\{1,O\}}$ cannot be paired with any $\bar{Z}$ to satisfy equation \eqref{eq:motivation:log_op_pairs} to yield a valid measurement pattern.
This can be seen by observing that $\bar{X}_{\{1\}} = X_1 Z_2$ is also a valid $\bar{X}$ operator.
Hence, any measurement pattern constructed from $\bar{X}_{\{1,O\}}$ and some $\bar{Z}$ must contain measurements $X_1$ and $Z_2$ returning eigenvalues $\lambda_{X_1}$ and $\lambda_{Z_2}$ respectively.
However, $\eval{X_\Psi} = \eval{\bar{X}_{\{1\}}} = \lambda_{X_1}\lambda_{Z_2}$, showing that after such measurements $\bar{X}$ has been measured and thus teleportation has failed.

%\paragraph{Give slightly more complicated example} \

In this last example it is easy to see why $S_{\{1,O\}}$ cannot be used to generate an $\bar{X}$ satisfying equation \eqref{eq:motivation:log_op_pairs} by noting that $I \in \qsupp{K_1}$, $O \in \qsupp{K_O}$ but $\qsupp{K_1} \cap \qsupp{K_O} = \emptyset$, where $\qsupp{A}$ is the set of qubits on which $A$ non-trivially acts.
However it is not always the case that if some set of generators in a stabilizer combination share support then their combination is useful for stabilizer pathfinding
For example, consider applying stabilizer pathfinding to teleportation from $I$ to $O$ on star graph state $\ket{\Psi}$ depicted below:
\begin{center}
\vspace{-10pt}
\begin{varwidth}{0.35\linewidth}
	\begin{center}	
		\begin{figure}[H]
			\vspace{15pt}
		    \includegraphics[width=\textwidth]{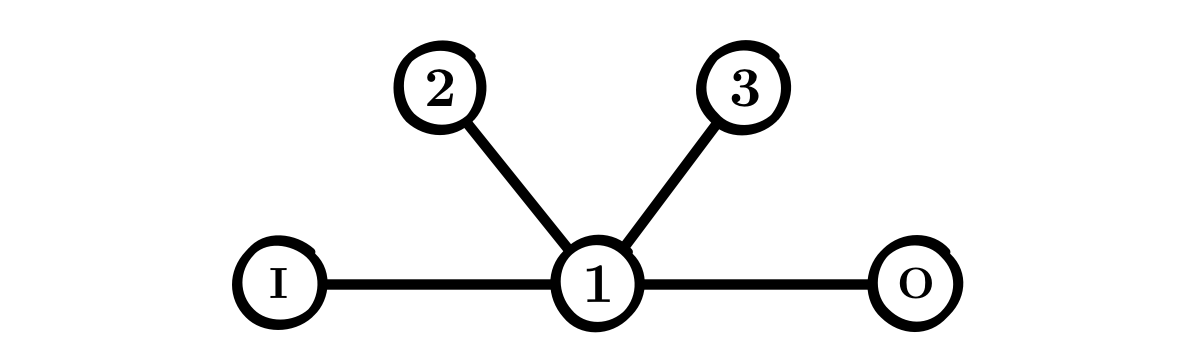}
		\end{figure}
	\end{center}
\end{varwidth}%
\hspace{-10pt}
\begin{minipage}{0.45\linewidth}
	\vspace{0pt}
	\begin{align}	
		\arraycolsep=0.7pt
		\begin{array}{rlllllllc}
			\bar{X}_\Psi = &     & Z_I &     &     &     &     &    & \\
			\bar{Z}_\Psi = &     & X_I & Z_1 &     &     &     &    & \\
		\mathcal{G}_\Psi = & \{  & Z_I & X_1 & Z_2 & Z_3 & Z_O & ,  & \; (K_1) \\ 
			   			   &     &     & Z_1 & X_2 &     &     & ,  & \; (K_2) \\ 
					       &     &     & Z_1 &     & X_3 &     & ,  & \; (K_3) \\
	 	       			   &     &     & Z_1 &     &     & X_O & \} & \; (K_O)
		\end{array} \notag
	\end{align}
\end{minipage}
\vspace{-12pt}
\end{center}

Consider the valid logical operator $\bar{Z}_{\{2,3,O\}} = S_{\{2,3,O\}}\bar{Z}_\Psi = X_I X_2 X_3 X_O$ on $\ket{\Psi}$.
Here we observe that $\qsupp{\bar{Z}_{\{O\}}} \cap \qsupp{K_2K_3} = \emptyset$ and therefore $\bar{Z}_{\{2,3,O\}}$ represents the same logical operation as $\bar{Z}_{\{O\}} = X_I X_O$ with $I, O \in \qsupp{\bar{Z}_{\{O\}}}$.
Even though in this case the inclusion of $K_2$ and $K_3$ does not prevent $\bar{Z}_{\{O\}}$ from acting on $I$ and $O$, $\bar{Z}_{\{2,3,O\}}$ still cannot be paired with any $\bar{X}$ that satisfies the stabilizer pathfinding condition.
This is seen by observing that any $\bar{X}$ must be produced using $K_1$ to ensure $\acomm{\bar{X}^{[O]}}{\bar{Z}^{[O]}} = 0$, and so qubits 2 and 3 must be measured in either the $Z$ or $Y$ basis.
On the other hand, a valid pair of logical operators satisfying equation \eqref{eq:motivation:log_op_pairs} would be $\bar{Z}_{\{O\}} = X_I X_O$ with $\bar{X}_{\{1\}} = X_1 Z_2 Z_3 Z_O$ such that $M = \{X_I, X_1, Z_2, Z_3\}$, also reproducing the measurement pattern provided by graph-pathfinding.

From the above examples we have illustrated that while many possible logical operators exist, only a subset can be used to produce valid measurement patterns.
Specifically, we have seen that teleportation can be prevented by logical operators which are decomposable into another logical operator (of reduced weight) and a non-overlapping stabilizer.
We now introduce definitions to generalise this concept and explicitly specify which stabilizers are useful for teleportation.

\subsubsection{Trivial and non-trivial stabilizers} \label{sec:SPF:relevant_stabs:triviality}

%\paragraph{Define non-trivial and trivial combinations.} \

Given the correspondence between logical operators and stabilizers, we shall define general conditions on the latter.
To distinguish generator combinations that are and aren't useful for teleportation, we define the concepts of \emph{non-trivial} and \emph{trivial} combinations, respectively.
A trivial stabilizer (produced by a trivial combination) is defined as a stabilizer $S_c$ where there exists some bipartition $(\alpha, \beta)$ of $c$ such that the bipartition's stabilizers do not share support, or
\begin{align}
	 S_c = S_\alpha S_\beta \quad \textrm{where} \quad \qsupp{S_\alpha} \cap \qsupp{S_\beta} = \emptyset.
\end{align}

If, as in the examples above, a logical operator $\bar{L} \in \mathcal{L}_\Psi$ decomposes in a similar way\footnote
{
	Technically, valid logical operators can also be made from trivial stabilizers, however these are generally unhelpful for teleportation and can easily be allowed for when they arise..
	For further discussion, see Appendix \ref{apx:SPF:algorithm_p2}.
}
i.e.\ $\qsupp{\bar{L}^\prime S_\alpha} \cap \qsupp{S_\beta} = \emptyset$ or $\qsupp{\bar{L}^\prime S_\beta} \cap \qsupp{S_\alpha} = \emptyset$ for $\bar{L}^\prime \in \mathcal{L}_\Psi$, then qubits $I$ and $O$ must either both be in the support of just one of the partitions or split across both.
In such cases, $\bar{L}$ either has unnecessary measurements that can prevent teleportation, or measurements which simply do not help teleport the input state onto $O$.
The definitions of trivial and non-trivial logical operators are summarised in box \ref{box:t_nt_logical_ops}.

A non-trivial stabilizer (produced by a non-trivial combination) is conversely defined as a stabilizer $S_c$ for which no such bipartition of $c$ exists, or equivalently $\qsupp{S_\alpha} \cap \qsupp{S_\beta} \neq \emptyset$ for all possible bipartitions $(\alpha, \beta)$ of $c$.
Non-trivial stabilizers produce logical operators that can be used to teleport from $I$ to $O$ and do not contain unnecessary measurements.
For a given stabilizer state $\ket{\Psi}$ we denote the subsets of trivial and non-trivial stabilizers as $\mathcal{S}_\Psi^{\mathrm{T}}$ and $\mathcal{S}_\Psi^{\mathrm{NT}}$ respectively, such that $\mathcal{S}_\Psi = \mathcal{S}_\Psi^{\mathrm{T}} \cup \mathcal{S}_\Psi^{\mathrm{NT}}$.

% \paragraph{Task of SPF is to keep track of stabilizer's triviality.} \

The task of stabilizer pathfinding is therefore to track all of $\mathcal{S}^{\mathrm{NT}}_\Psi$ without explicit tracking of $\mathcal{S}^T_\Psi$ as $\ket{\Psi}$ is subject to gates, measurements and the addition of new qubits.
For each operation $\ket{\Psi} \mapsto \ket{\Psi^\prime}$, stabilizer pathfinding must therefore be able to add the set of stabilizers that are newly non-trivial $\mathcal{S}^{\mathrm{NT}}_{\Psi^\prime} \setminus \mathcal{S}^{\mathrm{NT}}_{\Psi}$, remove the set of newly trivial stabilizers $\mathcal{S}^{\mathrm{T}}_{\Psi^\prime} \setminus \mathcal{S}^{\mathrm{T}}_{\Psi}$, and apply an update to any non-trivial stabilizers that remain so.

\subsection{Tracking non-trivial stabilizers} \label{sec:SPF:tracking_nt_stabs}

To simulate the preparation of a quantum state using some Clifford circuit, our algorithm must simulate four operations: i) preparation of qubits in computational basis states $\{\ket{0}, \ket{1}\}$; ii) the single-qubit Clifford gates $H$, and $S$; iii) the two-qubit Clifford $\CZ$ gate; and iv) measurements in the computational basis.
We also require the algorithm to be described by some small set of update rules, whereby each successive operation is simulated by updating an internal representation of the state (as opposed to rerunning a complete simulation for each new state).
A simulation based on update rules is preferred not only for speed but also for practical purposes as it may be implemented in real-time.

\begin{box_fig}[t]
	\begin{framed}
		{\bf TRIVIAL AND NON-TRIVIAL LOGICAL OPERATORS:} \vspace{4pt}
		
		Consider the task finding pairs of logical operators $\mathcal{L}_\Psi$ that satisfy the stabilizer pathfinding conditions defined in box \ref{box:spf_conds} for teleportation from qubit $I$ to $O$ on $\ket{\Psi}$.
		A logical operator $\bar{L} \in \mathcal{L}_\Psi$ is known as \emph{trivial} if it can be decomposed into some other lower-weight logical operator $\bar{L}^\prime \in \mathcal{L}_\Psi$ and stabilizer $S \in \mathcal{S}_\Psi$ with non-overlapping qubit supports $\qsupp{\bar{L}^\prime} \cap \qsupp{S} = \emptyset$.
		For trivial $\bar{L}$, either
		\vspace{-2pt}
		\begin{enumerate}[label=\alph*)]
			\itemsep0em 
			\item $O \in \qsupp{S}$ and $\bar{L}$ cannot be used for teleportation, or
			\item $O \notin \qsupp{S}$, and $\bar{L}$ contains operators $S$ unnecessary for teleportation.
		\end{enumerate}
		Therefore trivial logical operators should not be considered for teleportation.
		By contrast, \emph{non-trivial} logical operators are those for which no such decomposition exists and so represent measurements that can produce teleportation.
	\end{framed}
	\vspace{-7pt}
	\caption{Definitions for trivial and non-trivial logical operators.}
	\label{box:t_nt_logical_ops}
\end{box_fig}

\subsubsection{Adding qubits and acting gates} \label{sec:SPF:tracking_nt_stabs:qubits_and_gates}

% \paragraph{This is easy to do for appending qubits and single-qubit gates} \

For appending a single qubit $\ket{\Psi^\prime} = \ket{\Psi}\otimes\ket{0}_{n+1}$, the state generators acquire one additional non-trivial stabilizer $\mathcal{G}_{\Psi^\prime} = \mathcal{G}_{\Psi} \cup \{Z_{n+1}\}$ and so $\mathcal{S}^{\mathrm{NT}}_{\Psi^\prime} = \mathcal{S}^{\mathrm{NT}}_{\Psi} \cup \{Z_{n+1}\}$ is updated accordingly.

For the case of applying the single-qubit Clifford gate $U$
\begin{align}
	\mathcal{S}_{\Psi^\prime} = \{U S_c U^\dagger \;\;\forall\;\;\ S_c \in \mathcal{S}_\Psi\} \label{eq:SPF:stab_update}
\end{align}
In Remark \ref{apx:lems:single_q_unitaries} we also show that the action $U$ cannot affect the non-triviality of any stabilizer.

% \paragraph{Example of applying a CZ} \

For the two-qubit $\CZ$ gate, finding $S_c^\prime \in \mathcal{S}^{\mathrm{NT}}_{\Psi^\prime}$ is more involved as new non-trivial and trivial stabilizers may be generated.
We provide an example here with the update rule's full description found in Appendix section \ref{apx:SPF:algorithm_p1:cz_gates}.
Consider the following graph state produced by applying $\CZ_{6,8}$ to the state depicted in figure \ref{fig:three_ways_1}:
\begin{figure}[H]	
\begin{center}
	\begin{varwidth}{0.35\linewidth}
		\begin{center}	
			\begin{figure}[H]
			    \includegraphics[width=\textwidth]{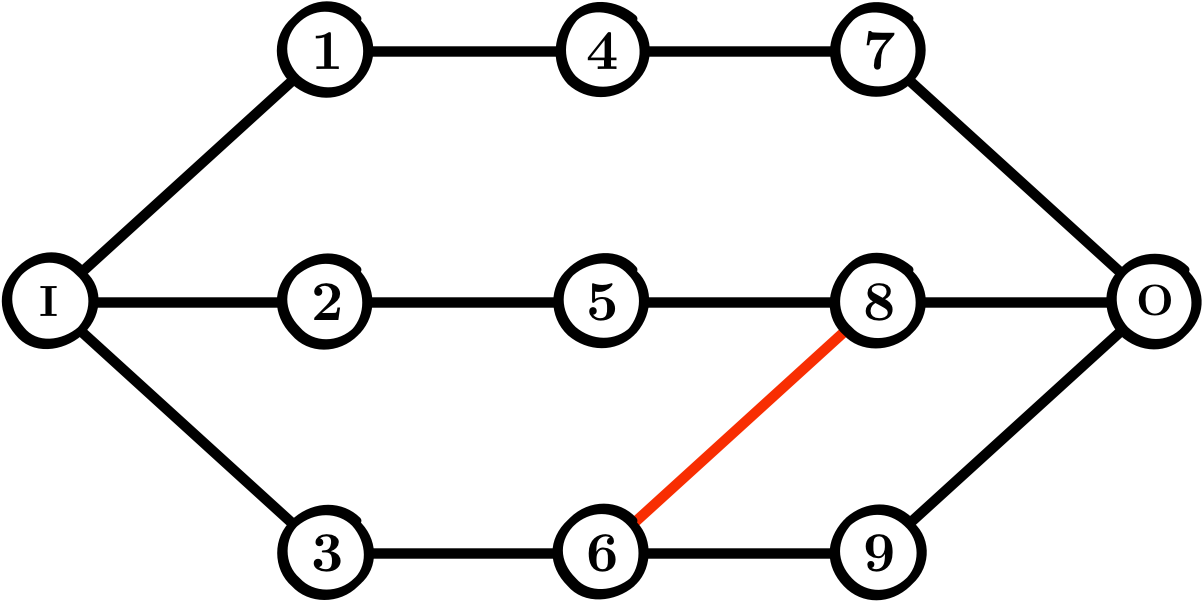}
			\end{figure}
		\end{center}
	\end{varwidth}%
	\begin{minipage}{0.45\linewidth}
		\vspace{-10pt}
		\begin{align}	
			\arraycolsep=1.4pt
			\begin{array}{rlll}
				\bar{X} = &\;\, Z_I \\
				\bar{Z} = &\;\, X_IZ_1Z_2Z_3 \\
				\mathcal{G}_{\psi} = &\{Z_IX_1Z_4, & Z_IX_2Z_5, & Z_IX_3Z_6, \\
										  	  &\;\, Z_1X_4Z_7, & Z_2X_5Z_8, & \red{Z_3X_6Z_8Z_9}, \\
											  &\;\, Z_4X_7Z_O, & \red{Z_5Z_6X_8Z_O}, & Z_6X_9Z_O, \\
											  &\multicolumn{2}{l}{\;\,Z_7Z_8Z_9X_O \}}
			\end{array} \notag
		\end{align}
	\end{minipage}
	\vspace{-22pt}
\end{center}
\caption{The graph state produced by applying $\CZ_{6,8}$ to the state depicted in figure \ref{fig:three_ways_1}.}
\label{fig:three_ways_2}
\vspace{-5pt}
\end{figure}
\noindent where the action of $\CZ_{6,8}$ has been highlighted and the generators are indexed as before (by the qubit on which the Pauli $X$ operator acts).

% \paragraph{Examples of stabilizer's triviality change} \

From inspection, it is seen that many stabilizers' triviality are unchanged, for example $S_{\{5,9\}}^\prime = Z_2 X_5 Z_6 Z_8 \penalty 0 X_9 Z_O$ and $S_{\{6,O\}}^\prime = Z_3 X_6 Z_7 X_O$, remain trivial and non-trivial respectively.
On the other hand, we see that $S_{\{6,8\}}^\prime = Z_3 Z_5 Y_6 Y_8 Z_9 Z_O \in \mathcal{S}^{\mathrm{NT}}_{\Psi^\prime}$, whereas $S_{\{6,8\}} = Z_3 Z_5 X_6 X_8 Z_9 Z_O \in \mathcal{S}^{\mathrm{T}}_{\Psi}$ under bipartition $(\{6\},\{8\})$.
We can also find examples of newly trivial stabilizers, for example $S_{\{5,6,O\}}^\prime = Z_2 Z_3 X_5 X_6 Z_7 Z_8 X_0 \in \mathcal{S}^{\mathrm{T}}_{\Psi^\prime}$ under bipartition $(\{5\},\{6, O\})$, whereas $S_{\{5,6,O\}} = Z_2 Z_3 X_5 X_6 Z_7 X_0 \in \mathcal{S}^{\mathrm{NT}}_{\Psi}$.

Although small, low-connectivity graph states are easy to analyse, larger graph states or non-graphical stabilizer states become increasingly difficult with a rapidly growing number of combinations available.
Our approach identifies new trivial and non-trivial stabilizers using only information of the stabilizers in $\mathcal{S}^{\mathrm{NT}}_{\Psi}$.
Since there are $2^{\abs{c}}$ possible bipartitions of any given $S_c$, when a test of triviality is needed, our method avoids an exhaustive search by identifying a reduced set of bipartitions to be tested.
Once all stabilizers with differing triviality have been found, the remaining non-trivial stabilizers can then be simply updated as described by equation \eqref{eq:SPF:stab_update}.

\subsubsection{Single-qubit measurements} \label{sec:SPF:tracking_nt_stabs:single_qubit_mnts}

% \paragraph{Example of how stabilizers update for measurement} \

Finally, we consider performing single-qubit Pauli measurements on the state.
As with the $\CZ$ gate, Pauli measurements may also affect the triviality of a given stabilizer.
For example, consider the state produced by measurement of $Y_9$ (followed by applying corrective gates $S^3$ on qubits 6 and $O$) on the previous state, as depicted below:
\begin{figure}[H]
\begin{center}
\begin{varwidth}{0.35\linewidth}
	\begin{center}	
		\begin{figure}[H]
		    \includegraphics[width=\textwidth]{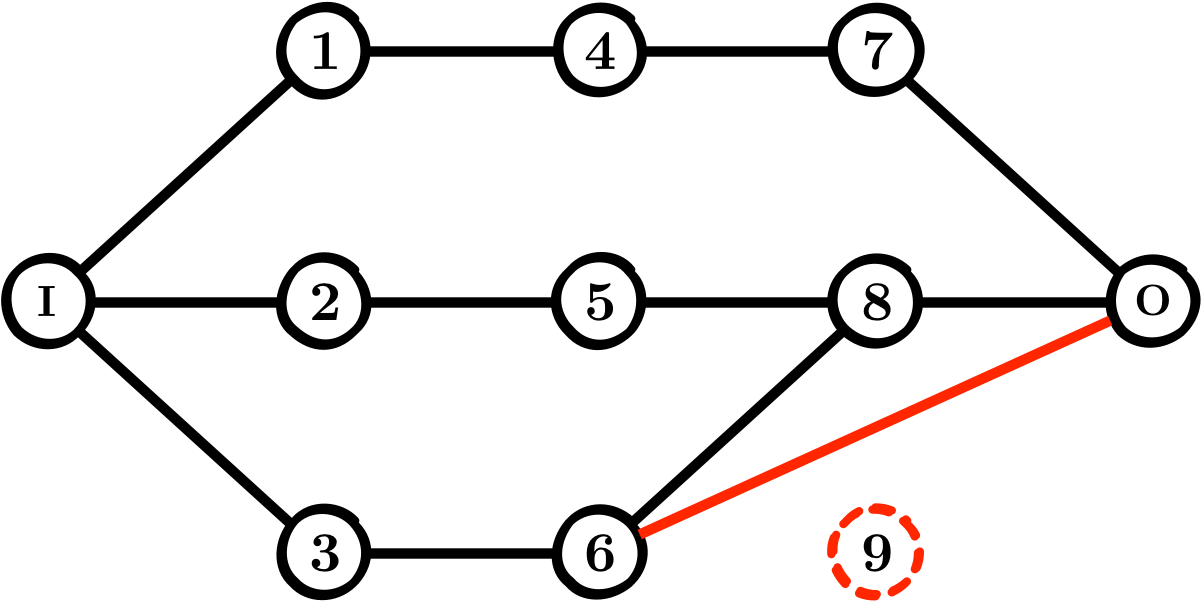}
		\end{figure}
	\end{center}
\end{varwidth}%
\begin{minipage}{0.45\linewidth}
	\vspace{-10pt}
	\begin{align}	
		\arraycolsep=1.4pt
		\begin{array}{rlll}
			\bar{X} = &\;\, Z_I & &\\
			\bar{Z} = &\;\, X_IZ_1Z_2Z_3 & & \\
				\mathcal{G}_{\psi} =		&\{   Z_IX_1Z_4, & Z_IX_2Z_5, & Z_IX_3Z_6, \\
										&\;\, Z_1X_4Z_7, & Z_2X_5Z_8, & \red{Z_3X_6Z_8Z_O}, \\
										&\;\, Z_4X_7Z_O, & \red{Z_5Z_6X_8Z_O}, & \red{Y_9}, \\
								       	&\multicolumn{2}{l}{\;\, Z_6Z_7Z_8X_O \}}
		\end{array} \notag
	\end{align}
\end{minipage}
\vspace{-22pt}
\end{center}
\caption{The graph state produced by measurement of $Y_9$ on the state depicted in figure \ref{fig:three_ways_2}.}
\label{fig:three_ways_3}
\vspace{-5pt}
\end{figure}
\noindent where the measurement's action has been highlighted and we have assumed qubit 9 is found in the $+1$ $Y$ eigenstate.

Again we see that many stabilizers' triviality are unchanged, such as $S_{\{5,9\}}^\prime = Z_2 X_5 Z_8 Y_9$ and $S_{\{6,O\}}^\prime = - Z_3 X_6 Z_7 X_8 X_O $ as before.
Similarly, new non-trivial stabilizers can be found, for example $S_{\{3,O\}}^\prime = Z_I X_3 Z_7 Z_8 X_O \in \mathcal{S}^{\mathrm{NT}}_{\Psi^\prime}$, whereas before $S_{\{3,O\}} = Z_I X_3 Z_6 Z_7 Z_8 Z_9 X_O \in \mathcal{S}^{\mathrm{T}}_{\Psi}$ under bipartition $(\{3\},\{O\})$.
Lastly, we also find new trivial stabilizers, for example $S_{\{6,7,8\}}^\prime = Z_3 Z_4 Z_5 Y_6 X_7 Y_8 Z_O \in \mathcal{S}^{\mathrm{T}}_{\Psi^\prime}$ under bipartition $(\{6, 8\},\{7\})$, whereas prior to measurement $S_{\{6,7,8\}} = Z_3 Z_4 Z_5 Y_6 X_7 Y_8 Z_9 \in \mathcal{S}^{\mathrm{NT}}_{\Psi}$.
As before, identifying the full set of stabilizers with triviality changed by measurement is somewhat involved, however our algorithm does achieves this with knowledge of only $\mathcal{S}^{\mathrm{NT}}_{\Psi}$ and without the need for exhaustive triviality testing.

% \paragraph{Additional remarks on first task} \

It must be noted that while we could not find an analytic expression for the worst-case efficiency of our algorithm, it will be highly state-specific and more crucially depend on intermediate states produced during the state's construction.
These rules are therefore most efficient for states at or close to their \emph{minimal edge representation} (or equivalent for non-graph states) \cite{Hein2004}.
For example, while for a completely connected graph state of $n$ qubits $S_c \in \mathcal{S}^{\mathrm{T}}_{\Psi} \;\forall\; \abs{c} \geq 4$, $c$ even, $2^n$ intermediate states must also be constructed and clearly such a construction would be inefficient.
In these cases alternative construction strategies should be considered.
For example, the previous state can be more efficiently created by first creating a $n+1$ star graph state (which is a minimal edge representation of the $n+1$ completely-connected graph state), followed by the measurement of the central qubit in the $Y$ basis.
While optimal construction strategies are beyond the scope of this paper, we note that minimum edge representation states are likely to be of interest for MBQC in many scenarios 
For a further discussion of ways to increasing the algorithm's efficiency, see section \ref{sec:discussion:opts_and_exts}.

\subsection{Finding loss-tolerant measurement patterns} \label{sec:SPF:finding_mnt_pats}

% \paragraph{How to find loss-tol mnt patterns} \

Once $\mathcal{S}^{\mathrm{NT}}_{\Psi}$ are known, the set of all non-trivial logical operators $\mathcal{L}^\mathrm{\mathrm{NT}}_\Psi$ and valid measurement patterns $\mathcal{M}$ can be found.
Our algorithm is designed to produce those $M$ which can tolerate the most lost first, so that only a fraction of all $\mathcal{M}$ need be found.
This is achieved by grouping $\mathcal{L}^\mathrm{\mathrm{NT}}_\Psi$ into three\footnote
{
	Here all three Pauli operators must be considered (rather than just $X$ and $Z$) because although all $\bar{Y}$ operators may be produce by a product of some $\bar{X}$ and $\bar{Z}$, it is possible that $\bar{Y}$ acts non-trivially on fewer qubits than both $\bar{X}$ and $\bar{Z}$.
}
 sets defined by the Pauli operator on qubit $O$, namely $X_O$, $Y_O$ and $Z_O$.
Within each group operators are then further sorted into groups of equal weight.
All minimum-weight $M$ are then be found by considering pairings of operators taken from the lowest-weight operators in groups where $\acomm{A_O}{B_O}=0$.
Higher weight $M$ can then be iteratively produced by considering pairing between lowest-weight and second-to-lowest-weight groupings, etc.

%\paragraph{Given $M$, loss tolerance can be found (Outro)} \

Once some subset of $\mathcal{M}$ is known, each $M$ provides some set of loss-tolerant qubits and hence the set of all qubit loss configurations can be easily found.
In practise we find that the majority of loss tolerance is provided by a few low-weight $M$ that are among the first to be found---see numerical results provided in Appendix \ref{apx:results:mnt_strategies}.
For a more detailed description of the above algorithm, see Appendix \ref{apx:SPF:algorithm_p2}.

\section{Loss tolerance} \label{sec:loss_tol}

\begin{figure}[t]
	\centering
	\vspace{0pt}
	\includegraphics[width=\textwidth]{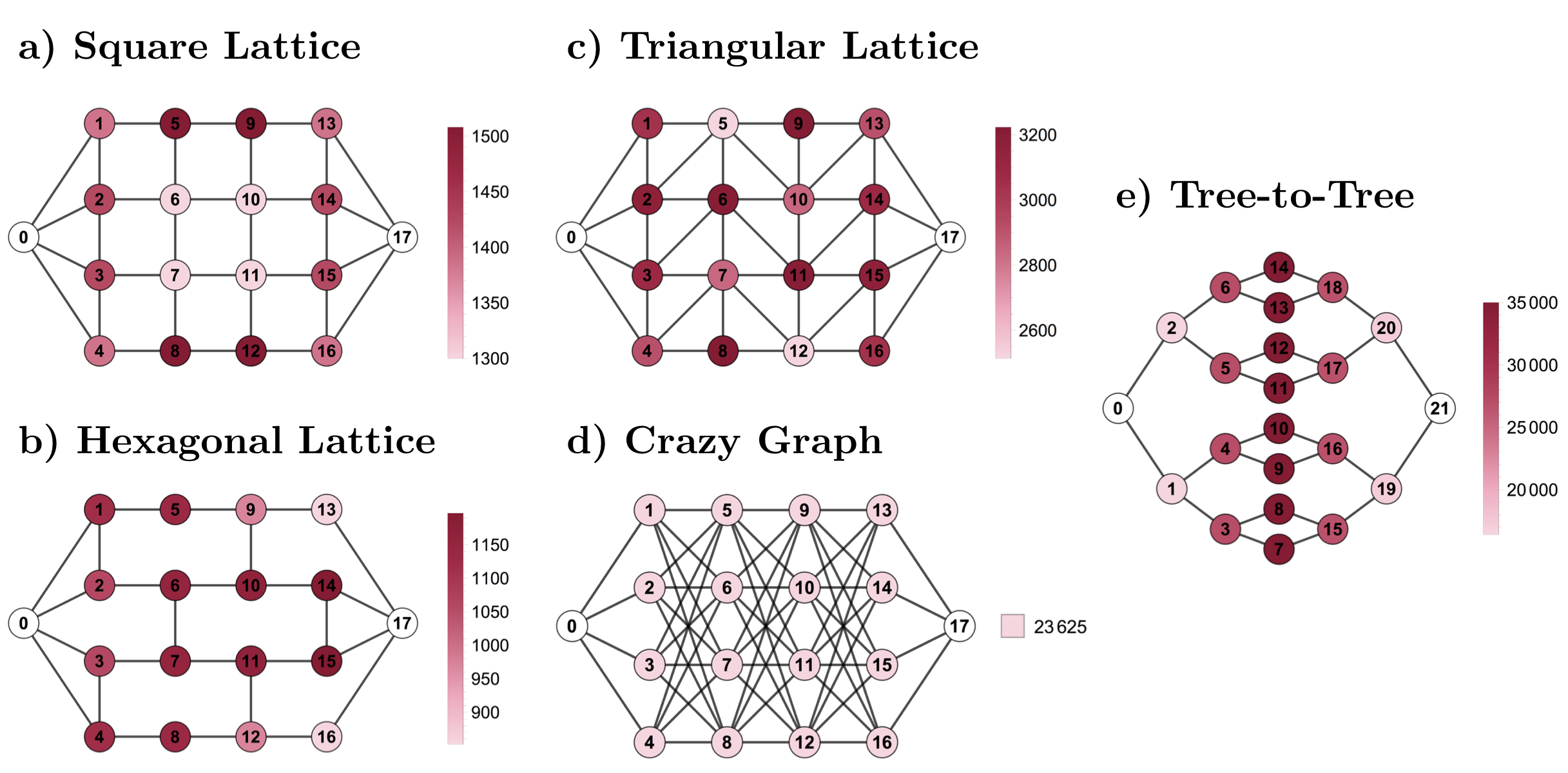}
	\caption{% Channels considered
		The five graph state channels considered for teleportation.
		Nodes are coloured by the number of measurement patterns that do not contain them (i.e.\ tolerant to their loss), with darker nodes indicating their loss can be more readily accommodated.}
	\label{fig:all_channels}
	\vspace{-5pt}
\end{figure}

%\paragraph{Introduce and motivate channels considered.} \

To assess the loss tolerance of stabilizer pathfinding we compare the performance of GPF and SPF on a selection of graph state channels.
Specifically, we consider the five channels depicted in figure \ref{fig:all_channels}: the square lattice, hexagonal lattice, triangular lattice, linear \emph{crazy graph}, and a \emph{tree-to-tree} graph.
The choice of three lattice channels is motivated by their relevance to MBQC architectures; the so-called crazy graph is considered due it's use as a loss-tolerant qubit channel \cite{Rudolph2016} and a tree-to-tree channel because it supports a high number of disjoint paths.

%\paragraph{Introduce concepts of heralded and unheralded loss.} \

We consider two kinds of loss: \emph{heralded} and \emph{unheralded}.
Heralded refers to loss events whose location is known, whereas unheralded to loss events on qubits whose locations are unknown until measurement.
Physically, heralded loss occurs when a qubit's existence can be inferred from some non-destructive measurement; for example, measurement of charge in a quantum dot can herald the existence of a spin-encoded qubit without measuring the qubit state.
On the other hand, unheralded loss occurs in qubit systems that do not permit such measurements, such as a dual-rail encoded qubit in linear optics where measurements are typically performed using photon detectors which absorb the photons (such as avalanche photodiodes).

%\paragraph{Motivate the study of both types of loss.} \

Importantly, unheralded loss presents a significant challenge to any MBQC scheme as it necessitates either loss-tolerantly encoded qubits or an architecture that can adapt dynamically to loss events when they occur.
However, even in a system with unheralded loss, the performance of SPF under heralded loss provides an upper bound on the loss tolerance of any given channel or teleportation measurement strategy.

\subsection{Heralded loss} \label{sec:loss_tol:heralded}

\begin{figure}[t]
	\centering
	\vspace{0pt}
	\includegraphics[width=0.7\textwidth]{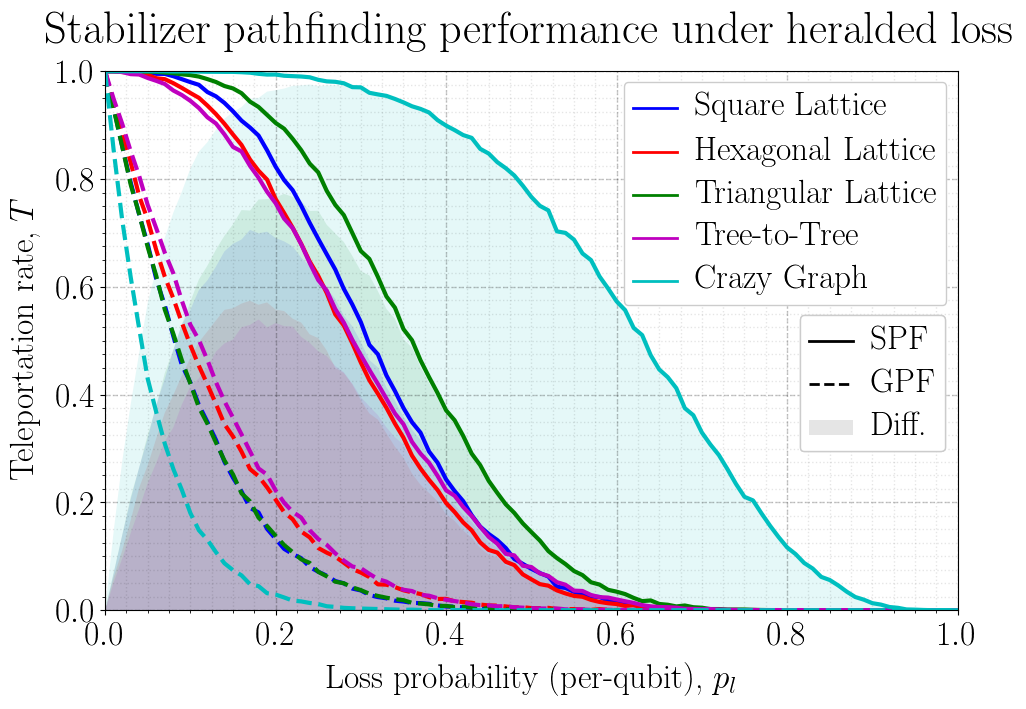}
	\caption{% Heralded SPF vs GPF
		The probability of successful teleportation across various graph state channels as a function of heralded qubit loss.
		Solid lines, dashed lines and shaded regions depict the performance of stabilizer pathfinding (SPF), the graph pathfinding (GPF) heuristic and the difference between them respectively (i.e.\ SPF's loss tolerance advantage).
		Each data point depicts $10^4$ Monte Carlo instances and uncertainties have not been plotted as $\Delta T < 0.5\% $ in all cases and so are smaller than the plotted lines.
		From these results it is clear that stabilizer pathfinding produces a significant improvement in terms of the loss tolerance of teleportation across these states.
		Additionally, the loss tolerance provided by SPF for the crazy graph channel agree with the theoretical prediction of $T=(1-p_l^m)^n$ for the case of $n, m = 4$ presented here.}
	\label{fig:spf_vs_gpf_tel_rates}
	\vspace{-5pt}
\end{figure}

%\paragraph{Describe simulation setup.} \

Firstly, we consider the performance of SPF in the case of heralded loss.
Once a set of measurement patterns for a state is known (be they produced by GPF or SPF), the rate of successful teleportation as a function of per-qubit loss probability can be easily found by Monte Carlo simulation.
Specifically, for a single Monte Carlo instance this is achieved by randomly generating some set of lost qubits (at some per-qubit loss rate $p_l$), which is cross-referenced with the set of all measurement patterns to find any pattern that do no include said qubits to allow successful teleportation.
In figure \ref{fig:spf_vs_gpf_tel_rates} we compare the performance of SPF to that of GPF on the five aforementioned channels under heralded loss.

%\paragraph{Describe simulation results: SPF and GPF comparison.} \

Firstly, it is clear that SPF provides a significant increase in the loss tolerance of teleportation rate $T$.
As should be expected, GPF has greatest loss tolerance on the tree-to-tree channel and lowest on the crazy graph (where it can tolerate no loss) whereas the converse is true for SPF respectively.
For all channels considered, the SPF's gain in loss tolerance peaks above $50\%$, even for the tree-to-tree channel.
Note that the SPF teleportation rate for crazy graph agrees with the theoretical rate\footnote
{
	Specifically, teleportation succeeds if at least one qubit per column is measured in the $X$ basis, allowing all but one physical qubit to be lost per encoded qubit.
} of $T=(1-p_l^m)^n$, where $m$ and $n$ are the number of qubits per column and channel depth respectively (with $m = n = 4$ in the case considered).
We further note that in the low-loss regime for $p_l < 10\%$ SPF achieves $T > 95\%$ for all lattice channels and even $T \approx 1$ for the triangular lattice.

\subsection{Unheralded loss} \label{sec:loss_tol:unheralded}

\begin{figure}[t]
	\centering
	\vspace{0pt}
	\includegraphics[width=0.7\textwidth]{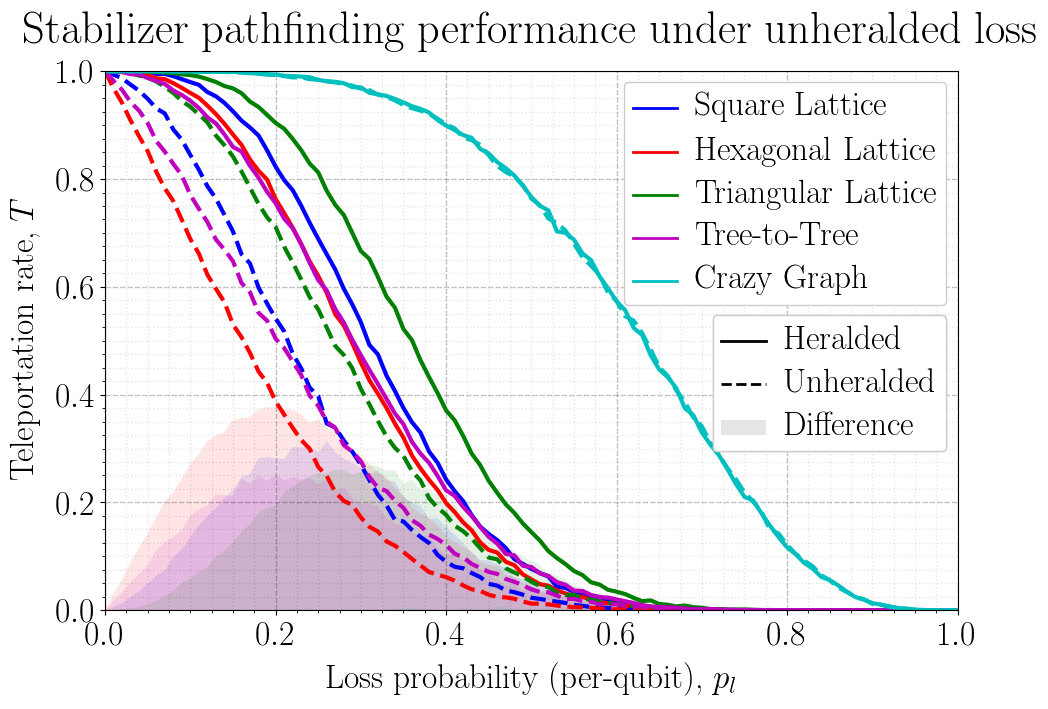}
	\caption{% H vs UH SPF rates
		The probability of successful teleportation across various graph state channels as a function of qubit loss in the heralded and unheralded case.
		Solid lines, dashed lines and shaded regions depict the performance of stabilizer pathfinding with heralded loss, unheralded loss and the difference between them respectively (i.e.\ the decrease in loss tolerance due to unheralded loss).
		Each data point depicts $10^4$ Monte Carlo instances and uncertainties are not depicted as $\Delta T < 0.5\% $ in all cases and so are smaller than the plotted lines.
		From these results it is clear that while unheralded loss does reduce the ability to teleport loss-tolerantly on MBQC resource states, high teleportation rates can still be achieved, especially for those lattices with higher degree.
		We also note that as expected the crazy graph channel does not show any decrease in loss tolerance in the unheralded case, indicating its unique construction as a loss-tolerant teleportation channel.}
	\label{fig:uh_vs_h_tel_rates}
	\vspace{-5pt}
\end{figure}

%\paragraph{Describe challenge of teleportation under unheralded loss.} \

We now consider the performance of SPF in the case of unheralded loss.
Any practical device that must tolerate unheralded loss during teleportation (without a loss-tolerant encoding) must be able to react to loss events as and when they occur.
One method for achieving this is to pre-compile a set of possible measurement patterns $\mathcal{M}$, many of which will contain common measurements.
Since teleportation can be achieved as long as one valid pattern can be performed that excludes all lost qubits, we require some \emph{measurement strategy} that finds at least one such measurement pattern with high probability.
For demonstrative purposes we only consider a single measurement strategy here, known as \emph{max tolerance}.
In the max tolerance measurement strategy the measurement that occurs most in the set of maximally loss-tolerant patterns is chosen; this process is then repeated until either a valid measurement pattern is completed and teleportation succeeds or none remain and teleportation fails.
For further details on this strategy and other considered see Appendix section \ref{apx:results:mnt_strategies}.

Specifically, at each Monte Carlo simulation instance a set of lost qubits is again generated and qubits are sequentially measured.
At each measurement, if the qubit is not lost then the measurement succeeds, and all measurement patterns not containing the measurement are discarded.
Conversely, if the measured qubit is lost, all measurement patterns that required measurement of the qubit are discarded.
Successful teleportation occurs when a successful measurement completes a measurement pattern, whereas if no patterns remain then teleportation fails.

%\paragraph{Discuss results: heraled-unheralded comparison} \

Our Monte Carlo simulation results, depicted in figure \ref{fig:uh_vs_h_tel_rates}, indicates that teleportation is surprisingly resilient to unheralded loss across the channels considered.
Immediately, it is clear that the crazy graph channel does not experience any decrease in teleportation rate in the unheralded case.
This can be understood by noting that, unlike the other channels considered, the crazy graph is a loss-tolerant encoding of a four qubit linear graph state and is specifically designed to tolerate unheralded loss.

For lattice channels, the disadvantage of unheralded loss decreases with increased node degree.
Most importantly, the decrease of $T$ with unheralded loss is far more favourable for higher degree in the regime of low loss $p_l < 5\%$, with the triangular lattice showing only a $2\%$ decrease in loss tolerance.
We finally note that although a fall in $T$ is observed for unheralded loss, this drop is not as large as might be expected.
Most notably, SPF teleportation on the triangular lattice under unheralded loss performs almost as well as SPF teleportation on the hexagonal lattice under heralded loss (which already marks a significant improvement when compared to teleportation using GPF).
Overall, these results present an optimistic outlook on the future of designing loss-tolerant architectures for quantum computation and other quantum technologies based on such states.

\subsection{Loss tolerance thresholds} \label{sec:loss_tol:thresholds}

\begin{figure}[t]
	\centering
	\vspace{0pt}
	\includegraphics[width=\textwidth]{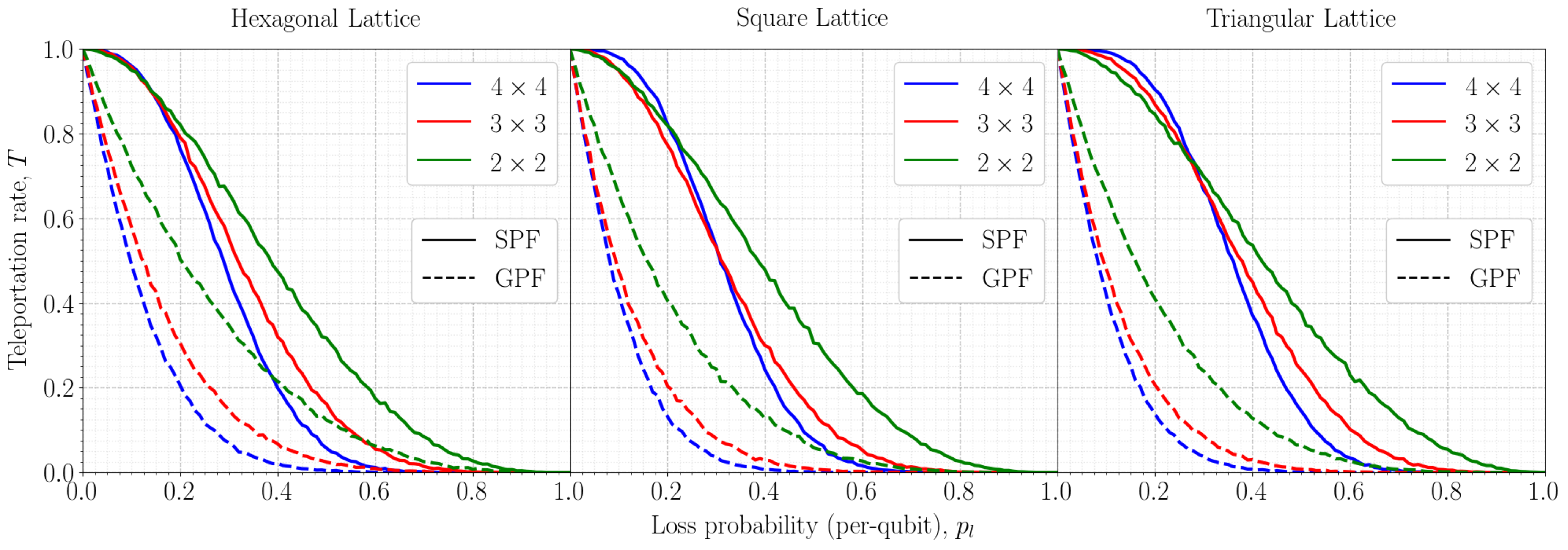}
	\vspace{-15pt}
	\caption{% GPF vs SPF Channel thresholds
		Loss-tolerant teleportation rates $T(p_l)$ for the three lattice channels considered: hexagonal, square and triangular over lattice sizes $2 \times 2$, $3 \times 3$ and $4 \times 4$ in the case of heralded loss.
		For the hexagonal and triangular lattice channels, stationary points appear at $p_l \approx 12$--$15\%$ and $p_l \approx 27$--$30\%$ respectively, whereas with the square lattice, no single crossing point occurs.
		If these points can be shown to represent a critical threshold in loss tolerance, then they define the channels' ultimate teleportation loss tolerance in limit of infinite channel size.
		However, $T(p_l)$ at higher $n$ are needed to verify the stationary points to prove such a conjecture.
	}
	\label{fig:GPF_vs_SPF_channel_thresholds}
	\vspace{-5pt}
\end{figure}

%\paragraph{Describe simulation results: threshold behaviour on lattices.} \

One interesting feature of loss tolerance provided by SPF is that $T(p_l)$ appears to exhibits threshold behaviour on the lattice channels considered here.
We conjecture that such a threshold does exist in the infinite limit, allowing a loss-tolerant threshold $p_l^*$ to be defined on these states.
If this conjecture holds, $p_l^*$ represents a distinct division in loss tolerance in the limit of infinite channel size (where $n \rightarrow \infty$ on an $n \times n$ lattice), where for $p_l<p_l^*$ loss-tolerant teleportation can always be achieved, whereas for $p_l>p_l^*$ it cannot.
It is known from percolation theory that the probability of finding a spanning path $\Gamma$ across some percolated lattice with edge/node percolation rate $p$ exhibits a threshold at some \emph{critical} probability $p=p^*$, which can be found from the stationary point in $\Gamma(p)$ between finite lattices of various sizes \cite{Stauffer1994}.

Figure \ref{fig:GPF_vs_SPF_channel_thresholds} depicts $T(p_l)$ for each lattice channel across lattice sizes $2 \times 2$, $3 \times 3$ and $4 \times 4$, with stationary points found for hexagonal and triangular lattice channels at $p_l \approx 12$--$15\%$ and $p_l \approx 27$--$30\%$ respectively.
No clear crossing is observed in the square lattice case.
While these results are not conclusive, it is surprising that any crossing points are found given the small lattice sizes considered here as larger systems are usually needed to overcome perturbative boundary effects.
In the square lattice case, we conjecture that no crossing occurs because of such boundary effects, given their effect on per-node loss tolerance as depicted in figure \ref{fig:all_channels}.
We also note that while the crazy graph lattice does also appear to show threshold behaviour (from the sigmoidal form of it's $T$ curve in figure \ref{fig:spf_vs_gpf_tel_rates}) for some threshold $p_l<1$ for the small sizes considered, this is not the case.
This is because as $n \rightarrow \infty$ for a $n \times n$ channel, $T = \lim_{n\rightarrow\infty}\left((1-p_l^n)^n\right) = 1$ for $p_l < 1$, in which case $p_l^* = 1$.

\section{Discussion} \label{sec:discussion}

We now provide possible optimisations and extensions of SPF and our algorithm as well as a discussion on it's applicability to various quantum architectures.

\subsection{Optimisations and extensions} \label{sec:discussion:opts_and_exts}

%\paragraph{Algorithm is extensive, but not necessarily needed} \

Our SPF algorithm allows for optimisation in various situations.
Because we have focused on achieving loss tolerance across all possible states, our implementation of SPF necessarily tracked all non-trivial stabilizers, making simulation of over 20 qubit graph states infeasible without significant computational power.
However, a more specialised implementation might suffice when applying SPF on a single type of channel, such as one which ignores isomorphic stabilizers on states with high symmetry.

%\paragraph{Example of optimisation}

More generally, many non-trivial stabilizers tracked during SPF have high weight and so typically don't contribute to loss-tolerant measurement patterns or to later future low-weight stabilizers as the state is grown.
Because only stabilizers of up to $\floor{\frac{n}{2}}$ generators are needed for triviality tests, all high-weight stabilizers produced from combinations of over $\floor{\frac{n}{2}}$ generators may be disregarded.
For large states this can significantly reduce computational runtime without either causing failure of our SPF algorithm or an appreciable reduction in loss tolerance.

%\paragraph{Pre-compilation can be used on regular lattices and for concatenation} \

Another route for optimisation and/or extension of SPF is provided by pre-compilation.
In a quantum architecture with probabilistic entangling operations within a fixed network structure, the non-trivial stabilizers for the ideal network may be pre-compiled (as an expensive but one-off computation) so that our algorithm can \emph{build down} (rather than build up) to the target state.
Alternatively, for large, regular graph state lattices, teleportation might be split up into many smaller SPF instances that are concatenated to produce the required long-range measurement patterns.
The challenge here is to ensure consistency across the boundaries between different SPF instances.

Finally, we observe that SPF can be extended to include parity checks for the detection of computational measurement errors.
This is seen by noting that each of a state's stabilizers provides a parity check for measurement on the state.
Therefore, if a stabilizer can be found whose non-trivial Pauli operators are a subset of the teleportation measurement pattern (or which contains additional available measurements), the operator provides a parity check on measurement outcomes.
Combinations of parity checks which overlap on sets of qubits can thus be used to detect Pauli measurement errors, as demonstrated by tolerance of up to a $50\%$ Pauli $Z$ error rate on crazy-graph states argued in Ref.\ \cite{Rudolph2016}.

\subsection{Relevance to quantum architectures} \label{sec:discussion:architectures}

Firstly, our results provide important progress towards addressing the problem of photon loss within linear optical quantum computer.
For example, some LOQC architecture proposals \cite{Kieling2007,Gimeno-Segovia2015,Zaidi2015,Morley-Short2018} overcome probabilistic entangling gates by the \emph{renormalization} of large blocks of percolated graph state to construct 3D topological error correction states such as the Raussendorf lattice \cite{Raussendorf2006a}.
But due their use of GPF for teleportation, these models previously lacked any tolerance to unheralded loss.
The loss tolerance thresholds conjectured in section \ref{sec:loss_tol:thresholds} indicate that loss tolerance can be straightforwardly achieved in these schemes by replacing GPF with SPF.

More generally, SPF can provide additional loss tolerance within many other quantum architectures without modification, before or after error-correction.
For example, given that flow conditions are unaffected by Pauli $X$, $Y$ and $Z$ measurements \cite{Browne2007}, SPF is readily compatible with teleportation in an MBQC architecture.
Similarly, because SPF makes no assumptions on the physical encoding of qubits, our work equally extends to teleportation of logical qubits which are encoded for quantum error correction or other reasons.
Hence in some systems it may be possible to substitute the resources associated with producing asymptotically-lossless logical qubits with the creation of a larger network of (heralded) low-loss logical qubits on which SPF can be applied.

One further aspect we have not explicitly addressed here is the ability to perform measurement-based qubit gates on top of an SPF teleportation scheme.
Unlike GPF, because no linear cluster is directly generated during SPF the standard MBQC gate protocol cannot be directly applied.
However, given that at least one unbroken qubit path of $Y$ and $X$ measurements must connect the input and output qubits, with all others effectively applying the necessary $Z$ measurements, it is straightforward to understand how standard MBQC protocols may be similarly implemented.
We leave a full description of such a protocol open for future works.

Lastly, it is clear from the results of section \ref{sec:loss_tol} that higher-degree graph states seem to provide a greater degree of loss tolerance in both the heralded and unheralded case.
As such, it is an open question whether this result generalises for arbitrary $n$-degree random graphs or lattices.
The identification of such a dependance would provide an important insight into the design of future network architectures.

\section{Conclusion and outlook} \label{sec:conclusion}

Qubit loss presents a substantial roadblock to the realistic implementation of teleportation within many large-scale quantum technologies, such as LOQC and quantum communication networks.
Previously, this could only be generally addressed through costly loss-tolerant encodings, especially so when qubit loss is unheralded.
However, by applying a generalised approach to teleportation, SPF, our work provides loss-tolerant teleportation on any stabilizer state using only single-qubit Pauli measurements and feed-forward.
We have show that SPF provides all maximally loss-tolerant teleportation measurement patterns (when loss is heralded) without use of an exhaustive search.
Furthermore we have shown that SPF also allows for significant degrees of unheralded qubit loss to be tolerated by dynamic and computationally-inexpensive measurement strategies.

In addition to theoretical analysis, we have provided an algorithm that implements SPF as well as unheralded measurement strategies which incur minimal computational cost.
Based on numerical simulations of SPF, we have further conjectured the existence of loss-tolerant thresholds on a variety of graph state lattices that exist in the limit of infinite lattice size.
From a practical perspective our results provide both a novel technique for tolerating loss in large-scale quantum architectures as well as a tool for maximal use of so-called noisy intermediate-scale quantum (NISQ) devices in the near future \cite{Preskill2018}.

\printbibliography

\section*{Acknowledgements}

This work was supported by the UK Engineering and Physical Sciences Research Council (EPSRC).
SMS is supported by the Bristol Quantum Engineering Centre for Doctoral Training, EPSRC grant EP/L015730/1.
SMS would like to thank Eric Johnston, Will McCutcheon, Stasja Stanisic, Sam Pallister, Chris Sparrow and Naomi Nickerson for fruitful discussions.
Finally, we thank two anonymous referees for their constructive comments and feedback.
All simulation code, data and analysis scripts are available for download and use at \url{https://github.com/sammorley-short/spf}.
%\textcolor{black}{All simulation and analysis scripts and datasets presented here are available for download from the Research Data Repository of University of Bristol at \url{LINK}.}

\newpage

\begin{appendices} 

\let\clearpage\relax
\noindent{\huge\bfseries Appendices\par}~

\noindent The appendix is structured as follows.
Appendix \ref{apx:stab_theory} gives an introduction to the stabilizer formalism and stabilizer states.
Appendix \ref{apx:SPF} provides a complete description of the SPF algorithm, including necessary proofs and pseudocode.
Finally, Appendix \ref{apx:results} presents a number of further results and discussion of the SPF algorithm.

\section{Stabilizer states} \label{apx:stab_theory}

Below we review the necessary theoretical results on which our work relies, namely the stabilizer formalism, graph states, and a generalised teleportation.

%\paragraph{Define stabilizer group of a state.} \

In the stabilizer formalism \cite{Gottesman1997}, for any given state $\ket{\Psi}$ there exists an associated stabilizer group $\mathcal{S}_\Psi$, consisting of the set of all operators that leave $\ket{\Psi}$ unchanged, such that
\begin{equation}
	\mathcal{S}_\Psi = \{S_i : S_i\ket{\Psi} = \ket{\Psi}\}.
\end{equation}
A state's stabilizer group is closed under multiplication, i.e.\ the product of any two stabilizers $S_i$ and $S_j$ is itself a stabilizer.
Furthermore, any state can be defined by a set of \emph{stabilizer generators} $\mathcal{G}_\Psi$, which generates the group under multiplication, which we write as $\mathcal{S}_\Psi = \gen{\mathcal{G}_\Psi}$.

%\paragraph{Introduce ``stabilizer state''.} \

\emph{Stabilizer states} are further defined as a subset of the $n$-qubit states that can be efficiently described by a set of $n$ stabilizer generators 
\begin{equation}
	\mathcal{G}_{\Psi_S} = \{K_i: K_i\ket{\Psi_S} = \ket{\Psi_S}, K_i \in \mathcal{P}_n,\;i=1,\ldots,n\}
\end{equation}
where $\mathcal{P}_n$ is the group of $n$-fold tensor products of Pauli operators $\id$, $X$, $Y$ and $Z$ up to multiplicative phase factors $\pm 1$ and $\pm i$ and hence $\mathcal{S}_\Psi \subset \mathcal{P}_n$.
Specifically, stabilizer states are those produced by any stabilizer circuit which consists of only: i) preparation of qubits in computational basis states $\{\ket{0}, \ket{1}\}$; ii) quantum gates from the Clifford group\footnote
{
	Here we have used an alternative form of the Clifford group, replacing the conventional $CNOT$ with the $CZ$ gate, as they are equivalent up to $H$.
	This choice is in accordance with the graph state focus of this work and also provides a symmetric entangling operation that produces simpler update rules.
}
 $\mathcal{C} = \{H,\, S,\, CZ\}$; and iii) measurements in the computational basis.
The Gottesman-Knill theorem \cite{Gottesman1997} states that any such circuit can be simulated efficiently on a classical computer.

%\paragraph{Comment on usefulness of stabilizer representation for QC.} \

Stabilizer circuits include many that exhibit rich and canonically ``quantum'' phenomena such as superposition and entanglement, including the generation of large multipartite entangled states.
For such states, many correlations between measurement outcomes exist across the whole state, a fact that allows them to be used as quantum error correction codes \cite{Gottesman1997}.
Just as all correlations present in a state are represented in it's state vector, they are equally present in a state's stabilizers.

%\paragraph{Stabilizers fully describe a state's correlations & Bell state example} \

One can intuitively interpret the set of stabilizer generators $\mathcal{G}_\Psi$ as the minimal representation of the quantum correlations for $\ket{\Psi}$.
For example, consider the Bell state $\ket{\Phi^+}$
\begin{align}
	\ket{\Phi^+} = \frac{1}{\sqrt{2}}(\ket{00}+\ket{11}) = \frac{1}{\sqrt{2}}(\ket{++}+\ket{--}) = \frac{1}{\sqrt{2}}(\ket{+i-i} + \ket{-i+i}),
\end{align}
where $\ket{\pm} = \frac{1}{\sqrt{2}}(\ket{0} \pm \ket{1})$ and $\ket{\pm i} = \frac{1}{\sqrt{2}}(\ket{0} \pm i\ket{1})$.
By noting that $\ket{\Phi^+} = H_1\CZ_{1,2}H_1H_2\ket{00}$ it is easy to show that $\mathcal{S}_{\Phi^+} = \{X_1X_2, Z_1Z_2, -Y_1Y_2\} = \langle\mathcal{G}_{\Phi^+}\rangle = \langle  X_1X_2, Z_1Z_2 \rangle$, where $A_i$ represents the operator that enacts unitary $A$ on qubit $i$ and $\id$ everywhere else and similarly $A_{i,j}$ for two-qubit gates.
In this example it is clear that the stabilizers have provided the set all of correlations between single-qubit measurements on $\ket{\Phi^+}$, namely that the possible eigenvalues returned from measurements of both qubits in the $X$ and $Z$ basis are correlated ($\lambda_{X_1}\lambda_{X_2} = \lambda_{Z_1}\lambda_{Z_2} = 1$), whereas the possible eigenvalues found for $Y$ measurements are anti-correlated ($\lambda_{Y_1}\lambda_{Y_2} = -1$).

%\paragraph{How stabilizers are updated by unitary evolution.} \

Just as the action of unitary operators evolve a state's quantum state vector in the Schr\"{o}dinger picture, a state's stabilizers are equivalently evolved within the Heisenberg picture \cite{Gottesman1997}.
The action of any Clifford gate unitary $U$ on a state $\ket{\Psi}$ therefore transforms $\mathcal{S}_\Psi$ as follows:
\begin{align}
	\ket{\Psi} \xrightarrow{U} \ket{{\Psi}^\prime} \quad \Leftrightarrow \quad \mathcal{S}_\Psi \xrightarrow{U} \mathcal{S}_{\Psi^\prime} = \{S_i^\prime = US_iU^\dagger : S_i \in \mathcal{S}_\Psi \} = \gen{K_i^\prime = UK_iU^\dagger : K_i \in \mathcal{G}_\Psi}.\label{eq:unitary_ev}
\end{align}

%\paragraph{How stabilizers are updated by measurement.} \

The effect of Pauli measurement operator $M \in \mathcal{P}_n$ on a state $\ket{\Psi}$ can also be represented by updating the stabilizer generators $\mathcal{G}_\Psi$.
For any $M$ there are two cases: either $M$ commutes with all of the state's stabilizers, or $M$ anti-commutes with one or more of them.
In the first case it is easy to show that either $M$ or $-M \in \mathcal{S}_\Psi$, and hence $\ket{\Psi}$ is an eigenstate of $M$ and so unaffected by the measurement.
However, in the latter case, the measurement $M$ will change the state.
In the case that the measurement $M$ returns an eigenvalue of $+1$, the stabilizers are updated as follows:
\begin{enumerate}
	\item 	Pick a $K_a \in \mathcal{G}_\Psi$ such that $\acomm{M}{K_a} = 0$.
			Replace $K_a$ with $M$.
	\item For all other $K_i \in \mathcal{G}_\Psi \!\setminus \!\{M\}$: 
	\begin{enumerate}
		\item If $\comm{M}{K_i} = 0$, leave $K_i$ unchanged
		\item If $\acomm{M}{K_i} = 0$, replace $K_i$ with $K_aK_i$.
	\end{enumerate}
\end{enumerate}
In the case that $M$ returns eigenvalue $-1$, the same process is applied, except $M \mapsto -M$ \cite{Gottesman1997}.

%\paragraph{States' stabilizers also provide set of logical operators} \

If the number of stabilizer generators on an $n$-qubit state $\ket{\Psi}$ is reduced from $n$, $\mathcal{G}_\Psi$ no longer defines a single state, but rather a subspace of states.
A set of logical basis states can be defined on this subspace together with logical operators that satisfy Pauli relations, thus creating an encoded \emph{logical} qubit.
While such constructions are commonly applied to design quantum error correcting codes, their application can be applied to other quantum information protocols, such as quantum teleportation.

%\paragraph{Define set of all logical operators} \

In our case, we are specifically interested in the complete set of logical operators for a qubit input into some Clifford circuit $U$ (with some set of input ancillae qubits).
For example, consider an unknown state $\ket{\psi}_I$ of a single \emph{input} qubit $I$ with logical operators $\bar{X}_\psi = X_I$ and $\bar{Z}_\psi = Z_I$, which is then encoded via $\ket{\Psi} = U(\ket{\psi}_I\otimes\ket{0}^{\otimes n})$, such that 
\begin{align}
	\bar{X}_\Psi = UX_IU^\dagger, \quad \bar{Z}_\Psi = UZ_IU^\dagger
	\quad\textrm{and}\quad \mathcal{G}_{\Psi} = \{UZ_iU^\dagger\}_{i=1}^n
\end{align}
However, after encoding there are many other valid logical operators, as the product of a logical operator and stabilizer is also a valid logical operator.
Hence, the set of all logical operators for our encoded qubit is given by
\begin{align}
	\mathcal{L}_{\Psi} = i^k \times \{S\bar{L}_\Psi : S \in \mathcal{S}_{\Psi}, \bar{L}_\Psi \in \{\bar{X}_\Psi, \bar{Z}_\Psi, \bar{Y}_\Psi\}\}
\end{align}
where $i^k = \{1, -1, i, -i\}$ and $\bar{Y}_\Psi = i\bar{X}_\Psi\bar{Z}_\Psi$.
Formally, $\mathcal{L}_{\Psi}$ is the centralizer subgroup of operators in $\mathcal{P}_n$ which commute with the stabilizers of $\ket{\Psi}$.
Just as with stabilizers, the logical operators are similarly updated after a measurement $M$.
If $\comm{\bar{L}}{M}=0$ for $\bar{L} \in \mathcal{L}_{\Psi}$, then $\bar{L}$ is unchanged, otherwise the logical operators are transformed by $\bar{L} \mapsto \bar{L}^\prime = K_a\bar{L}$.

\section{Algorithm details} \label{apx:SPF}

Here we present a full description of the update rules applied in our algorithm to achieve stabilizer pathfinding.
First, we provide an algorithm that allows for stabilizers' and generators' triviality to be more efficiently tested.
Secondly, we provide update rules to track a state's stabilizers for any Clifford circuit.
Finally, we present a method for combining said stabilizers to produce valid teleportation measurement patterns.
The summary pseudocode for the above algorithms are also presented in algorithm \ref{alg:SPF}.

\subsection{Triviality tests} \label{apx:SPF:triv_testing}

\subsubsection{Testing stabilizers' triviality} \label{apx:SPF:triv_testing:stabs}

%\paragraph{Exhaustive approach for single-shot test of non-triviality} \ 

As part of the algorithm we shall describe, it will be necessary to remove some unknown trivial stabilizers, namely after applying a $\CZ$ or measurement.
In the general case of deciding whether some arbitrary stabilizer $S_c$ is trivial or not, given only $\mathcal{G}_\Psi$, the authors could not improve on a limited exhaustive search.
In this case, the space of all bipartitions is explored by finding $\mathcal{B}_c = \{S_b : b \in \mathcal{P}(c), \abs{b} \leq \floor*{c / 2}\}$, where $\mathcal{P}(c)$ is the power set of $c$ and $\{b : \abs{b} \leq \floor*{c / 2}\}$ being the set of all smaller halves of every possible bipartition of $c$.
For each element of $\mathcal{B}$, $S_bS_c$ is found and if $\qsupp{S_bS_c} \cap \qsupp{S_b} = \emptyset$, then $(b, c \setminus b)$ describes a trivial bipartition of $S_c$.
If no such $b$ is found, then $S_c$ must be non-trivial.
Clearly this method---which we refer to as \emph{single-shot triviality testing}---becomes inefficient for large $\abs{c}$.

%\paragraph{In-situ approach for batch testing non-triviality} \

However, a significantly faster triviality test can be performed within the context of our stabilizer pathfinding algorithm.
Consider the case where you have a large set of stabilizers $\mathcal{S}_\Psi^*$, some of which are trivial $\mathcal{S}_\Psi^t \subseteq \mathcal{S}_\Psi^T$, but which is otherwise guaranteed to contain all non-trivial stabilizers, such that $\mathcal{S}^*_{\Psi} = \mathcal{S}_\Psi^t \cup \mathcal{S}_\Psi^{\mathrm{NT}}$.
The task is then to extract $\mathcal{S}_\Psi^{\mathrm{NT}}$ by removal of $\mathcal{S}_\Psi^t$ without an exhaustive search.
To do so, initially consider testing a single $S_c \in \mathcal{S}_\Psi^*$ for triviality.
If $S_c$ is trivial, there must exist some \emph{minimal} bipartition $(\alpha, \beta)$ of $c$ such that either $S_\alpha$ and/or $S_\beta$ are non-trivial.
Since $\mathcal{S}_\Psi^{\mathrm{NT}} \subseteq \mathcal{S}_\Psi^*$, any such bipartitions can be identified by finding $\mathcal{B}^*_c = \{S_\beta : \beta \subset c, \abs{\beta} \leq \floor*{c / 2}, S_\beta \in \mathcal{S}_\Psi^*\}$ and then tested using the same process as single-shot triviality testing\footnote
{
	To further increase the efficiency of this test, $\beta$ are tested in ascending cardinality.
	Hence, if $\mathcal{B}$ does contain trivial $S_\beta$, they are never tested, as all $\alpha \subset \beta$ are tested first (although they would still correctly identify a trivial bipartition if tested).
}.
By repeating all $S_c \in \mathcal{S}_\Psi^*$ and removing any that fail, $\mathcal{S}_\Psi^{\mathrm{NT}}$ can be found with less than $\mathcal{O}(\abs{\mathcal{S}_\Psi^*}^2)$ tests (and far fewer in practise).
We shall refer to this type of triviality testing as \emph{batch} triviality testing.

\subsubsection{Testing generators' triviality} \label{apx:SPF:triv_testing:gens}
	
In rare cases, single-qubit measurements can cause generators themselves to become trivial.
As triviality of stabilizers is assessed under the assumption of generator non-triviality, these trivial generators must be detected and replaced, in a process known as \emph{generator detrivialisation}.
Specifically, we consider the case when there exists some generator $K_a$ and stabilizer $S_b$ for $a \not\in b$ such that $\qsupp{K_aS_b} \cap \qsupp{S_b} = \emptyset$ (recall that $\qsupp{A}$ is the set of qubits on which $A$ non-trivially acts).

For example, consider the 5-qubit stabilizer state\footnote
{
	Where $\ket{\Psi}$ can be produced by the Pauli $X$ measurement of a qubit within a 6-qubit ring graph state.
}
 $\ket{\Psi}$ that undergoes measurement $X_4$ as follows:
\begin{center}
\begin{minipage}{0.35\linewidth}
	\vspace{10pt}
	\begin{align}	
		\arraycolsep=0.7pt
		\begin{array}{rlllllllc}
			\bar{X}_\Psi = &     & Z_0 &     &     &     &     &    & \\
			\bar{Z}_\Psi = &     & X_0 &     &     & Z_3 & Z_4 &    & \\
		\mathcal{G}_\Psi = & \{  &     & X_1 & X_2 & Z_3 & Z_4 & ,  & \; (K_1) \\
						   &     &     & Z_1 & Z_2 &     &     & ,  & \; (K_2) \\
			   			   &     & Z_0 &     & Z_2 & X_3 &     & ,  & \; (K_3) \\ 
					       &     & Z_0 & Z_1 &     &     & X_4 & \} & \; (K_4) 
		\end{array} \notag
	\end{align}
\end{minipage}
\begin{minipage}{0.1\linewidth}
	$\xrightarrow{\textrm{Measure }X_4}$
\end{minipage}
\begin{minipage}{0.35\linewidth}
	\vspace{10pt}
	\begin{align}	
		\arraycolsep=0.7pt
		\begin{array}{rlllllllc}
			\bar{X}_\Psi = &     & Z_0 &     &     &     &     &    & \\
			\bar{Z}_\Psi = &     & X_0 & X_1 & X_2 &     &     &    & \\
		\mathcal{G}_\Psi = & \{  &     &     &     &     & X_4 & ,  & \; (K_1) \\ 
			 	       	   &     &     & Z_1 & Z_2 &     &     & ,  & \; (K_2) \\
			   			   &     & Z_0 &     & Z_2 & X_3 &     & ,  & \; (K_3) \\ 
					       &     & Z_0 & Z_1 &     &     &     & \} & \; (K_4) 
		\end{array} \notag
	\end{align}
\end{minipage}
%\vspace{-22pt}
\end{center}
Here we observe that after measurement $K_3 = Z_0 Z_2 X_3$, whereas $S_{\{2,3,4\}} = X_3$ and so $K_3$ is now trivial (or equivalently that $\qsupp{S_{\{2,3,4\}}} \cap \qsupp{S_{\{2,4\}}} \neq \emptyset$ before the measurement whereas $\qsupp{S_{\{2,3,4\}}} \cap \qsupp{S_{\{2,4\}}} = \emptyset$ after).
To ensure all generators are non-trivial they are updated such that $K^\prime_3 = K_2K_3K_4 = X_3$ and $K^\prime_i = K_i$ otherwise.

However, because each stabilizer's combination now represents a different set of generators, this update can change bipartitions' support overlaps and so updating the set of non-trivial stabilizers is more involved.
Firstly, stabilizers that do not contain the updated generator are unaffected, such that $S^\prime_c=S_c$ for $a \not\in c$ and hence their triviality is also unchanged.
However, during detrivialisation previously trivial stabilizers $S_c$ may become non-trivial $S_c^\prime$ if $a \in c$.
To find all newly non-trivial stabilizers, all stabilizer pairs $S_\alpha, S_\beta \in \mathcal{S}_\Psi^{\mathrm{NT}}$, $\alpha \cap \beta = \emptyset$ where
\begin{align}
	\qsupp{S_\alpha} \cap \qsupp{S_\beta} = \emptyset \quad \textrm{but} \quad \qsupp{S^\prime_\alpha} \cap \qsupp{S^\prime_\beta} \neq \emptyset.
\end{align}
are found and $S^\prime_{\alpha\cup\beta}$ added to $\mathcal{S}_{\Psi^\prime}^{\mathrm{NT}}$.
This process is then repeated on the newly non-trivial stabilizers found to ensure all previously trivial tripartitions, etc.\ are found.
Finally, any trivial stabilizers are then removed by applying a batch triviality test on all stabilizers $S^\prime_c \in \mathcal{S}_{\Psi^\prime}^{\mathrm{NT}}$ with $a \in c$.

We lastly note that trivial generators are an unavoidable by-product of the multiplication of generators performed after measurement and hence is never required after any unitary operation.

\subsection{SPF Part 1: Tracking all non-trivial stabilizers} \label{apx:SPF:algorithm_p1}

%\paragraph{Introduce what task is.} \

Stabilizer pathfinding must track the action of the three elements of any Clifford circuit, namely: 
\begin{enumerate}[label=\roman*)]
	\item preparation of qubits in computational basis states $\{\ket{0}, \ket{1}\}$;
	\item quantum gates from the Clifford group $\mathcal{C} = \{H,\, S,\, CZ\}$; and
	\item measurements in the computational basis,
\end{enumerate}
on the state's non-trivial combinations $\mathcal{S}_\Psi^{\mathrm{NT}}$.
We shall define the action of these operations as a series of set update rules using the convention $A_{\Psi} \mapsto A_{\Psi^\prime}$.

\subsubsection{Single-qubit gates} \label{apx:SPF:algorithm_p1:single_qubits}

%\paragraph{Action of i).} \

Firstly, consider appending qubit $\ket{0}_{n+1}$ to $\ket{\Psi_S}$.
The stabilizer generators are simply updated by
\begin{equation}
	\mathcal{G}_\Psi \mapsto \mathcal{G}_{\Psi^\prime} = \mathcal{G}_\Psi \cup \{K_{n+1}\},
\end{equation}
where $K_{n+1} = Z_{n+1}$.
Since $\qsupp{K_{n+1}} \cap \qsupp{K_i} = \emptyset$ for all $K_i \in \mathcal{G}_\Psi$, there is only a single new non-trivial combination, namely $\{n+1\}$, and hence the non-trivial stabilizers are similarly updated by
\begin{equation}
	\mathcal{S}_\Psi^{\mathrm{NT}} \mapsto \mathcal{S}_{\Psi^\prime}^{\mathrm{NT}} = \mathcal{S}_\Psi^{\mathrm{NT}} \cup \{S_{\{n+1\}}\}
\end{equation}

%\paragraph{Action of ii): Single qubit gates} \

Next, consider the action of quantum gates from $\mathcal{C}$ on an $n$-qubit stabilizer state.
As defined in section \ref{apx:stab_theory}, when acted on by $U \in \mathcal{C}$ the stabilizer generators are simply updated as in equation \eqref{eq:unitary_ev}, or
\begin{equation}
	\mathcal{G}_\Psi \mapsto \mathcal{G}_{\Psi^\prime} = \{K_i^\prime\}_{i=1}^{n} = \{UK_iU^\dagger\}_{i=1}^{n}, \label{eq:gens_unitary_update}
\end{equation}
where $K_i \in \mathcal{G}_\Psi \;\forall\; i$.
In the case of a single-qubit Clifford gate $U \in \{H,\, S\}$, it is simple to show that all non-trivial stabilizers can be similarly updated via
\begin{equation}
	\mathcal{S}_\Psi^{\mathrm{NT}} \mapsto \mathcal{S}_{\Psi^\prime}^{\mathrm{NT}} = \{S_c^\prime\} = \{US_cU^\dagger\}.
\end{equation}
Importantly, for the above statement to hold, it must also true that all non-trivial stabilizers remain non-trivial after applying $U$ and similarly for those that are trivial.
This requirement is shown to hold in the following Remark.

\begin{remark} \label{apx:lems:single_q_unitaries}
	If before the action of a single-qubit Clifford gate $U$ a stabilizer is trivial (non-trivial), $S_c \in \mathcal{S}_\Psi^{\mathrm{T}}$ ($\mathcal{S}_\Psi^{\mathrm{NT}}$), then after $U$ it remains trivial (non-trivial), $S_c^\prime \in \mathcal{S}_{\Psi^\prime}^{\mathrm{T}}$ ($\mathcal{S}_{\Psi^\prime}^{\mathrm{NT}}$).
\end{remark}
\begin{proof}
	Firstly, we consider the case of a trivial stabilizer $S_c$.
	If $S_c \in \mathcal{S}_\Psi^{\mathrm{T}}$, there exists a trivial bipartition $(\alpha, \beta)$ of $c$ such that 
	\begin{equation}
		S_c = S_\alpha S_\beta = \prod_{i \in \alpha}K_i \prod_{j \in \beta}K_j \quad \quad\textrm{and}\quad \quad \qsupp{S_\alpha} \cap \qsupp{S_\beta} = \emptyset.\label{eq:trivial_combo}
	\end{equation}
	Without loss of generality, assume $\qsupp{U} \subseteq \qsupp{S_\alpha}$ (since $\abs{\qsupp{U}} = 1$).
	From equation \eqref{eq:gens_unitary_update}, then $S_\alpha^\prime = US_\alpha U^\dagger$ and $S_\beta^\prime = US_\beta U^\dagger = S_\beta$.
	Finally, since $\qsupp{S_\alpha^\prime} = \qsupp{US_\alpha U^\dagger} = \qsupp{S_\alpha}$, then from equation \eqref{eq:trivial_combo}, $\qsupp{S_\alpha^\prime} \cap \qsupp{S_\beta^\prime} = \emptyset$ and hence $(\alpha, \beta)$ is also a trivial bipartition of $c$ after $U$, showing that $S_c^\prime \in \mathcal{S}_{\Psi^\prime}^{\mathrm{T}}$.
	Secondly, in the non-trivial case, the above proof can be easily inverted to show that if after $U$, $S_c^\prime \in \mathcal{S}_{\Psi^\prime}^{\mathrm{T}}$ then $S_c$ must also admit a trivial bipartition, and hence $S_c \notin \mathcal{S}_{\Psi}^{\mathrm{NT}}$.
	It follows that $S_c \in \mathcal{S}_{\Psi}^{\mathrm{NT}} \Rightarrow S_c^\prime \in \mathcal{S}_{\Psi^\prime}^{\mathrm{NT}}$.
	
\end{proof}

\subsubsection{Two-qubit gates} \label{apx:SPF:algorithm_p1:cz_gates}

%\paragraph{Action of ii): CZ gates (intro)} \

Now consider the $\CZ_{u,v}$ gate applied to qubits $u$ and $v$.
For $\CZ_{u,v}$ the stabilizer generators are similarly updated using equation \eqref{eq:gens_unitary_update}, however it is also possible that new non-trivial stabilizers are produced and/or previously non-trivial stabilizers are made trivial.
In general, this will cause the number of non-trivial stabilizers to change, for example, $\mathcal{S}^\mathrm{NT}_\Psi = \{X_1, X_2\}$ for the empty two-qubit graph state, whereas $\mathcal{S}^\mathrm{NT}_{\Psi^\prime} = \{X_1Z_2, Z_1X_2, Y_1Y_2\}$ after $\CZ_{1,2}$ is applied.
Since the effect of $\CZ_{u,v}$ on any $K_i$ may either increase or decrease $\abs{\qsupp{K_i}}$ there are two cases that must be considered for stabilizer pathfinding: either stabilizers that go from i) non-trivial to trivial or ii) trivial to non-trivial.
A method for efficiently updating $\mathcal{S}_\Psi^{\mathrm{NT}}$ is now presented below.

%\paragraph{Action of ii): CZ gates (non-trivial to trivial)} \

First we address i), the case of $\CZ_{u,v}$ causing previously non-trivial stabilizers to become trivial, where for some stabilizer initially $S_c \in \mathcal{S}_\Psi^{\mathrm{NT}}$ but $S_c^\prime \in \mathcal{S}_{\Psi^\prime}^{\mathrm{T}}$ afterwards.
Given $\CZ_{u,v}$ can change the qubit support of any stabilizer $S_i$ by at most a single qubit $u$ or $v$ then it must be true for $S_c \in \mathcal{S}_\Psi^{\mathrm{NT}}$ that there exists some bipartition $(\alpha, \beta)$ of $c$ such that
\begin{align}
	&\qsupp{S_\alpha} \cap \qsupp{S_\beta} \subseteq \{u, v\} \\ 
	\textrm{but} \quad &\qsupp{S_\alpha^\prime} \cap \mathcal{Q}(S_\beta^\prime) = \emptyset
\end{align}
where $S_\alpha, S_\beta \in \mathcal{S}_\Psi^{\mathrm{NT}}$.
Since $\mathcal{S}_\Psi^{\mathrm{NT}}$ is known, finding $(\alpha, \beta)$ requires finding $S_\alpha, S_\beta \in \mathcal{S}_\Psi^{\mathrm{NT}}$ where the decrease in support of $S_\alpha$ and $S_\beta$ is exactly equal to the previous support overlap between them.

%\paragraph{Action of ii): CZ gates (trivial to non-trivial)} \

Next we consider ii), the case of $\CZ_{u,v}$ causing previously trivial stabilizers to become non-trivial, where for some stabilizer initially $S_c \in \mathcal{S}_\Psi^{\mathrm{T}}$ but after $S_c^\prime \in \mathcal{S}_{\Psi^\prime}^{\mathrm{NT}}$.
For this case, an initial search must be performed to find some bipartition $(\alpha,\beta)$ for which
\begin{align}
&\qsupp{S_\alpha} \cap \mathcal{Q}(S_\beta) = \emptyset \label{eq:t-nt_no_init_supp}\\
\textrm{but}\quad &\qsupp{S_\alpha^\prime} \cap \mathcal{Q}(S_\beta^\prime) \subseteq \{u, v\} \label{eq:t-nt_final_supp}
\end{align}
where $S_\alpha, S_\beta \in \mathcal{S}_\Psi^{\mathrm{NT}}$.
Similarly to i), this can only be achieved if the increase in support is equal to the new support overlap.
While equation \eqref{eq:t-nt_final_supp} is a necessary condition for non-triviality, it is not sufficient as it must also hold across all possible bipartitions.
In some cases it may be simultaneously possible to find two bipartitions of $c$, with one $(\alpha, \beta)$ satisfying equations \eqref{eq:t-nt_no_init_supp} and \eqref{eq:t-nt_final_supp} and another $(\gamma, \delta)$ that does not.
Fortunately, these cases can be easily detected and there are three possible variants:
\begin{enumerate}[label=\alph*)]
	\item $\qsupp{{S^\prime_\gamma}} \cap \{u,v\} = \qsupp{{S^\prime_\delta}} \cap \{u,v\} = \emptyset$ (neither has support on $u, v$),
	\item $\qsupp{{S^\prime_\gamma}} \cap \{u,v\} = \{u\}$, $\qsupp{{S^\prime_\delta}} \cap \{u,v\} = \{v\}$ (both are supported on $u, v$, but without overlap),
	\item $\qsupp{{S^\prime_\gamma}} \cap \{u,v\} = \emptyset$, $\qsupp{{S^\prime_\delta}} \cap \{u, v\} \neq \emptyset$ (only one has support on $u, v$).
\end{enumerate}

For a), $S_c = S_\gamma S_\delta \Rightarrow u, v \notin \qsupp{S_c}$ and Remark \ref{apx:lems:no_u_v_supp} can be applied to show that if such a bipartition does exist then equations \eqref{eq:t-nt_no_init_supp} and \eqref{eq:t-nt_final_supp} cannot be simultaneously satisfied and hence a) never occurs.

\begin{remark}\label{apx:lems:no_u_v_supp}
	If before the action of $\CZ_{u,v}$, a stabilizer $S_c$ where $u, v \notin \qsupp{S_c}$ is trivial (non-trivial), then after $\CZ_{u, v}$ it remains trivial (non-trivial).
\end{remark}

\begin{proof}
	If $u, v \notin \qsupp{S_c}$ then it must be the case that ${S_\alpha}^{[u]} = {S_\beta}^{[u]}$ and ${S_\alpha}^{[v]} = {S_\beta}^{[v]}$ for all possible bipartitions $(\alpha, \beta)$, where recall $A^{[i]}$ is the Pauli operator of $A$ acting on qubit $i$.
	Hence after $\CZ_{u,v}$ then ${S_\alpha}^{[u]} = {S_\beta}^{[u]}$ and ${S_\alpha}^{[v]} = {S_\beta}^{[v]}$, and so $u, v \notin \qsupp{S^\prime_c}$.
	For a stabilizer to become trivial from non-trivial (or vice versa), then it must be true that some bipartition must change from sharing support to not sharing support (or vice versa).
	However, it follows immediately from the previous comments that for any bipartition $(\alpha, \beta)$ then
	\begin{align}
		\qsupp{S_\alpha} \cap \qsupp{S_\beta} = \emptyset \;\Leftrightarrow\;\qsupp{S^\prime_\alpha} \cap \qsupp{S^\prime_\beta} = \emptyset
	\end{align}
	and hence trivial and non-trivial stabilizers respectively remain so.
\end{proof}

For b) and c), because all $S_\gamma, S_\delta \in \mathcal{S}_\Psi^{\mathrm{NT}}$ that gain support are considered by the initial search for all $(\alpha, \beta)$ satisfying equations \eqref{eq:t-nt_no_init_supp} and \eqref{eq:t-nt_final_supp}, only $S_\gamma, S_\delta$ with no support gain are relevant here.
Additionally, since only one half of a trivial bipartition need be found to prove triviality, then we can further limit our search to stabilizers with some support on $u, v$, reducing the set of potentially trivial partitions of any stabilizer found by equations \eqref{eq:t-nt_no_init_supp} and \eqref{eq:t-nt_final_supp} to 
\begin{align}
	\mathcal{S}^*_{\Psi^\prime} = \{S^\prime_i : \qsupp{S^\prime_i} \cap \{u, v\} \neq \emptyset, S_i \in \mathcal{S}_{\Psi}^{\mathrm{NT}}\}.
\end{align} 
As this set is can be easily found from $\mathcal{S}_{\Psi}^{\mathrm{NT}}$, a batch triviality test can be applied to $\mathcal{S}^*_{\Psi^\prime}$ with the reduced partition batch $\mathcal{B}^*_c = \{S_b : b \subset c, \abs{b} \leq \floor*{c / 2}, S_b \in \mathcal{S}_{\Psi^\prime}^*\}$, allowing any trivial bipartitions to be found with minimal overhead cost.

%\paragraph{Action of ii): CZ gates (summary)} \

To summarise, after $\CZ_{u,v}$, $\mathcal{S}_{\Psi}^{\mathrm{NT}}$ is updated by applying the following steps:
\begin{enumerate}
	\item Update all $S^\prime_c \in \mathcal{S}_{\Psi^\prime}^{\mathrm{NT}}$ with non-trivial support on $u$ and/or $v$, via $S^\prime_c \mapsto \CZ_{u,v}S_c\CZ_{u,v}$.
	\item Remove from $\mathcal{S}_{\Psi^\prime}^{\mathrm{NT}}$ any $S^\prime_c$ that admits a bipartition no longer containing support overlap.
	\item Add to $\mathcal{S}_{\Psi^\prime}^{\mathrm{NT}}$ any new $S^\prime_c$ that can be produced by stabilizer pairs that now share support.
	\item Apply a batch triviality test to $\mathcal{S}_{\Psi^\prime}^{\mathrm{NT}}$ with reduced partition batch $\mathcal{B}^*_c$ to remove any trivial stabilizers.
\end{enumerate}

\subsubsection{Qubit measurement} \label{apx:SPF:algorithm_p1:measurement}

%\paragraph{Action of iii).} \

We shall consider the general case of performing an arbitrary singe-qubit Pauli measurement $M \in \{X, Y, Z\}$ (returning a $+1$ eigenvalue).
In the standard approach to updating $\mathcal{G}_\Psi$, as described in Appendix \ref{apx:stab_theory} and Ref.\ \cite{Gottesman1997}, generators for which $\acomm{K_i}{M} = 0$ are updated as $K_i^\prime = K_aK_i$ for some chosen $\acomm{K_a}{M} = 0$ (with $K_a^\prime = M$).
However, after this update is applied, $K_i^\prime = K_aK_i$ may now be a trivial generator with respect to $M$, such that $\qsupp{MK_aK_i} \cap \qsupp{M} = \emptyset$.
In these cases, rather than applying the generator detrivialisation described section \ref{apx:SPF:triv_testing:gens}, we can apply a modified update to the generators $K_i^\prime = MK_aK_i$.
Similar remarks can apply for some cases where $\comm{K_i}{M} = 0$, in which case the update rule $K_i^\prime = MK_i$ is applied.
To summarise, after measurement $M$, the state's stabilizer generators are updated $\mathcal{G}_\Psi \mapsto \mathcal{G}_{\Psi^\prime} = \{K_i^\prime\}$ using the five following rules:
\[
\begin{array}{ll}
	K_a^\prime = M \quad &\textrm{for some} \; \acomm{K_a}{M} = 0 \\
	K_i^\prime = K_aK_i \quad &\textrm{if} \; \acomm{K_i}{M} = 0,\; K_i \neq K_a,\; \qsupp{MK_aK_i} \cap \qsupp{M} \neq \emptyset \\
	K_i^\prime = MK_aK_i \quad &\textrm{if} \; \acomm{K_i}{M} = 0,\; K_i \neq K_a,\; \qsupp{MK_aK_i} \cap \qsupp{M} = \emptyset \\
	K_i^\prime = K_i \quad &\textrm{if} \; \comm{K_i}{M} = 0,\; \qsupp{MK_i} \cap \qsupp{M} \neq \emptyset \\
	K_i^\prime = MK_i \quad &\textrm{if} \; \comm{K_i}{M} = 0,\; \qsupp{MK_i} \cap \qsupp{M} = \emptyset
\end{array}
\]
We further define two key sets of updated generators $A$ and $B$, such that $A = \{i: K_i^\prime = K_aK_i\} \,\cup\, \{i: K_i^\prime = MK_aK_i\}$ and $B = \{i: K_i^\prime = MK_i\} \,\cup\, \{i: K_i^\prime = MK_aK_i\}$.
From $A$ and $B$ we can derive the general update rule for arbitrary post-measurement stabilizers
\begin{align}
	S_c^\prime &= M^{\abs{c\,\cap\, \{a\}}}\left(\prod_{i \in c\,\cap\, (B\setminus A)}\!\!\!\!\!\!\!\!\!MK_i\right)\left(\prod_{j \in c\,\cap\, (A \cap B)}\!\!\!\!\!\!\!\!\!MK_aK_j\right)\left(\prod_{k \in c\,\cap\, (A \setminus B)}\!\!\!\!\!\!\!\!\!K_aK_k\right)\left(\prod_{l \in c \setminus (A \cup B \cup \{a\})}\!\!\!\!\!\!\!\!\!K_l\right) \label{eq:new_stab_decomp}\\
 	&= M^{\abs{c\,\cap\, \{a\}}}M^{\abs{c\,\cap\, B}}K_a^{\abs{c\,\cap\, A}} \prod_{i \in c \setminus \{a\}}\!\!\!\!K_i \\
 	&= M^{\abs{c\,\cap\, (B \cup \{a\})}}K_a^{\abs{c\,\cap\, A}} S_{c\setminus \{a\}}
\end{align}
where $\abs{A}$ denotes the cardinality of the set $A$, $A \setminus B$ the set difference of $A$ and $B$, and we have used the fact that $\comm{M}{MK_i} = 0\;\forall\; i \in B\setminus A$ and $\comm{K_i}{K_j} = 0 \;\forall\; i, j$.
From this, $S_c \in \mathcal{S}_\Psi^{\mathrm{NT}}$ can be easily updated.
However, given that a single measurement $M$ may change the support of many generators, updating $\mathcal{S}_\Psi^{\mathrm{NT}}$ finding newly trivial and non-trivial stabilizers is more involved.

Firstly, in the following description of measurement update rules, we will require the following Lemma: 
\begin{lemma}\label{apx:lems:new_non-triv_post-mnt}
	After single-qubit measurement $M$ is made on state $\ket{\Psi}$, all new non-trivial stabilizers, are contained within the set $\{S_c: a \notin c, S_{c \cup \{a\}} \in \mathcal{S}_{\Psi}^{\mathrm{NT}}\}$ and where $K_a$ is the generator removed from $\mathcal{G}_\Psi$ and replaced with $M$.
\end{lemma}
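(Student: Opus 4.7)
The plan is to prove the lemma in two steps: first, show that $a \notin c$ for every new non-trivial combination $c$, and second, show that the pre-measurement stabilizer $S_{c \cup \{a\}}$ was non-trivial. Both steps rely on the expression $S'_c = M^{\abs{c \cap (B \cup \{a\})}} K_a^{\abs{c \cap A}} S_{c \setminus \{a\}}$ derived in equation \eqref{eq:new_stab_decomp}, together with the fact that $K'_a = M$ is a single-qubit Pauli supported on the measured qubit $q$.

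For the first step, I would argue by contradiction. Suppose $a \in c$ and $S'_c$ is new non-trivial. Since $K'_a = M$, we have $S'_c = M \cdot S'_{c \setminus \{a\}}$. Consider the bipartition $(\{a\}, c \setminus \{a\})$ of $c$: since $\qsupp{S'_{\{a\}}} = \{q\}$, non-triviality forces $q \in \qsupp{S'_{c \setminus \{a\}}}$. More generally, any trivial bipartition $(\gamma, \delta)$ of $c \setminus \{a\}$ would extend to the bipartition $(\gamma \cup \{a\}, \delta)$ of $c$, with $\qsupp{M \cdot S'_\gamma} \cap \qsupp{S'_\delta} \subseteq (\qsupp{S'_\gamma} \cup \{q\}) \cap \qsupp{S'_\delta} = \emptyset$ (since $q \notin \qsupp{S'_\delta}$, which would violate the bipartition assumption on $c \setminus \{a\}$ if $q \in \qsupp{S'_\gamma} \cup \qsupp{S'_\delta}$, and if $q$ is in neither the intersection is trivially empty). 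Hence $S'_{c \setminus \{a\}}$ must itself be non-trivial. Under the convention that combinations with $a \in c$ are redundant (since $S'_c = M \cdot S'_{c \setminus \{a\}}$ and $M$ is itself a stabilizer generator post-measurement), the new non-trivial stabilizers can be enumerated via combinations with $a \notin c$.

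For the second step, assume $S'_c$ is new non-trivial with $a \notin c$ and, for contradiction, that $S_{c \cup \{a\}}$ was pre-measurement trivial. Then there exists a bipartition $(\alpha' \cup \{a\}, \beta')$ of $c \cup \{a\}$ with $\qsupp{K_a S_{\alpha'}} \cap \qsupp{S_{\beta'}} = \emptyset$, where $\alpha' \cup \beta' = c$ is a bipartition of $c$. The plan is to show that this disjointness is preserved under the measurement update, so that $(\alpha', \beta')$ is a trivial bipartition of $c$ post-measurement, contradicting $S'_c \in \mathcal{S}_{\Psi'}^{\mathrm{NT}}$. Using equation \eqref{eq:new_stab_decomp} applied to each of $S'_{\alpha'}$ and $S'_{\beta'}$, one can show that $S'_{\alpha'}$ equals either $S_{\alpha'}$ or $K_a S_{\alpha'}$ up to a $M$-factor on qubit $q$, and similarly for $S'_{\beta'}$. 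Since both the $M$ and $K_a$ factors act only within $\{q\} \cup \qsupp{K_a}$, and $\qsupp{K_a S_{\alpha'}} \cap \qsupp{S_{\beta'}} = \emptyset$ already accounts for the $K_a$ contribution, a careful accounting of how the $M$ factors shift support at $q$ yields $\qsupp{S'_{\alpha'}} \cap \qsupp{S'_{\beta'}} = \emptyset$, giving the desired contradiction.

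The main obstacle is the bookkeeping in the second step: while the high-level idea is that disjointness of $\qsupp{K_a S_{\alpha'}}$ and $\qsupp{S_{\beta'}}$ transfers to $\qsupp{S'_{\alpha'}}$ and $\qsupp{S'_{\beta'}}$, the detailed analysis must handle every combination of how the sets $A$ and $B$ intersect $\alpha'$ and $\beta'$ (which controls whether $S'_{\alpha'}$ and $S'_{\beta'}$ pick up $K_a$ and/or $M$ factors) and how these factors act at $q$. The case split is finite but tedious; carefully observing that $q \in \qsupp{K_a}$ and that $M$'s sole effect is to toggle membership of $q$ in the support should make each case routine.
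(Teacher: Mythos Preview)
Your proposal follows essentially the same strategy as the paper: both invoke the decomposition $S'_c = M^{\abs{c \cap (B \cup \{a\})}} K_a^{\abs{c \cap A}} S_{c \setminus \{a\}}$ and then perform a parity case analysis on how the $M$- and $K_a$-factors affect the supports of the two halves of a bipartition. Two remarks on where your write-up diverges from the paper.

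\textbf{Step 1 is overcomplicated.} The paper dispatches this in one line by noting that $K'_a = M$ has single-qubit support. The point (implicit in the paper, explicit in the update rules) is that every $K'_i$ with $i \neq a$ satisfies $K'^{[q]}_i = \id$; this is exactly what the $M$-multiplications in rules 3 and 5 achieve. Hence $q \notin \qsupp{S'_{c \setminus \{a\}}}$ automatically, and the bipartition $(\{a\}, c \setminus \{a\})$ is always disjoint post-measurement, making $S'_c$ trivial whenever $a \in c$. You correctly write ``non-triviality forces $q \in \qsupp{S'_{c \setminus \{a\}}}$'', but fail to observe that this consequent is always false, which would give the contradiction immediately; instead you detour through extending bipartitions and settle for a weaker ``redundancy'' conclusion.

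\textbf{Step 2 is the contrapositive, but with a hidden subtlety.} You start from a trivial bipartition $(\alpha' \cup \{a\}, \beta')$ of $c \cup \{a\}$ and try to show $(\alpha', \beta')$ is post-trivial for $c$. The paper instead starts from a trivial bipartition $(\alpha, \beta)$ of $c$ (which exists because $S'_c$ is \emph{new}, i.e.\ $S_c$ was trivial). The paper's direction is cleaner: its case (i) ($\abs{\alpha \cap A}$, $\abs{\beta \cap A}$ both even) immediately gives $S'_c$ trivial, case (ii) is ruled out by the disjointness hypothesis, and only case (iii) links to $c \cup \{a\}$. In your direction, the analogue of case (i)---where $\abs{\alpha' \cap A}$ and $\abs{\beta' \cap A}$ are both even---does \emph{not} straightforwardly work: neither $S'_{\alpha'}$ nor $S'_{\beta'}$ picks up a $K_a$ factor, so disjointness of $\qsupp{K_a S_{\alpha'}}$ and $\qsupp{S_{\beta'}}$ does not directly yield disjointness of $\qsupp{S'_{\alpha'}}$ and $\qsupp{S'_{\beta'}}$ (they can overlap on $\qsupp{K_a} \setminus \{q\}$). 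To close this case you would need to fall back on the pre-triviality of $S_c$ and locate a different bipartition---essentially reverting to the paper's starting point. So the ``tedious but routine'' bookkeeping you flag actually hides a case that your chosen direction does not handle on its own.
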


\begin{proof}
	Firstly, since $\abs{\qsupp{M}} = 1$ and $K_a^\prime = M$, then $S_c^\prime \in \mathcal{S}_{\Psi^\prime}^{\mathrm{T}}$ for all $c \ni a$.
	Hence only combinations that do not contain $a$ need be considered.
	
	Now consider the previously trivial combination $c \ni a$ with bipartition $(\alpha, \beta)$ such that $\qsupp{S_\alpha} \cap \qsupp{S_\beta} = \emptyset$ for $S_c = S_\alpha S_\beta \in \mathcal{S}_\Psi^{\mathrm{T}}$ before measurement.
	As in equation \eqref{eq:new_stab_decomp}, we can write the updated stabilizer as
	\begin{align}
		S_c^\prime &= S_\alpha^\prime \cdot S_\beta^\prime = \left(M^{\abs{\alpha \cap B}} K_a^{\abs{\alpha \cap A}} S_\alpha\right) \cdot \left(M^{\abs{\beta \cap B}} K_a^{\abs{\beta \cap A}} S_\beta\right)
	\end{align}
	
	We first consider the cases in which $\abs{\alpha \cap B}$ and $\abs{\beta \cap B}$ are even, for which there are three further sub-cases: 
	\begin{enumerate}[label=\roman*)]
		\item $\abs{\alpha \cap A}$ and $\abs{\beta \cap A}$ even $\Rightarrow$ $S_\alpha^\prime = S_\alpha$ and $S_\beta^\prime = S_\beta$;
		\item $\abs{\alpha \cap A}$ and $\abs{\beta \cap A}$ odd $\Rightarrow$ $S_\alpha^\prime = K_aS_\alpha$ and $S_\beta^\prime = K_aS_\beta$;
		\item $\abs{\alpha \cap A}$ odd and $\abs{\beta \cap A}$ even $\Rightarrow$ $S_\alpha^\prime = K_aS_\alpha$ and $S_\beta^\prime = S_\beta$ (and vice versa).
	\end{enumerate}
	For i), $S_\alpha^\prime = S_\alpha$ and $S_\beta^\prime = S_\beta \,\Rightarrow\, \qsupp{S_\alpha^\prime} \cap \qsupp{S_\beta^\prime} = \emptyset$, and $S_c^\prime \in \mathcal{S}_{\Psi^\prime}^{\mathrm{T}}$ remains trivial.
	For ii), if $S_\alpha \mapsto S_\alpha^\prime = K_aS_\alpha$, then $\acomm{S_\alpha}{M} = 0$ and hence $\qsupp{M} \subset \qsupp{S_\alpha}$.
	Since the same applies for $S_\beta^\prime$, then $\qsupp{S_\alpha} \cap \qsupp{S_\beta} \neq \emptyset$ and therefore $(\alpha, \beta)$ is not a trivial bipartition of $S_c$, which is a contradiction and so ii) does not occur.
	Finally for iii), $S_c^\prime = K_aS_\alpha \cdot S_\beta = S_{\alpha \,\cup\, \{a\}} \cdot S_\beta = S_{c \,\cup\, \{a\}}$, and therefore $(\alpha, \beta)$ is only a trivial bipartition for $S_c^\prime$ if $(\alpha \cup \{a\}, \beta)$ is for $S_{c \,\cup\, \{a\}}$.
	
	In the cases in which $\abs{\alpha \cap B}$ and/or $\abs{\beta \cap B}$ are odd, we observe that the effect of applying $M$ is to remove support on $\qsupp{M}$.
	The previous three cases then also apply except with $\qsupp{S_\alpha^\prime} \mapsto \qsupp{MS_\alpha^\prime} \subset \qsupp{S_\alpha^\prime}$ (and similarly for $\beta$) which can only decrease the number of cases where $\qsupp{S_\alpha^\prime} \cap \qsupp{S_\beta^\prime} = \emptyset$.
	
	It therefore follows that the only instances of trivial $S_c$ and non-trivial $S_c^\prime$ that occur are those for which $S_{c \cup \{a\}}$ is non-trivial.
	Or equivalently, if $S_c \in \mathcal{S}_{\Psi}^{\mathrm{T}}$ then $S_c \in \mathcal{S}_{\Psi^\prime}^{\mathrm{NT}}$ iff $S_{c \cup \{a\}} \in \mathcal{S}_{\Psi}^{\mathrm{NT}}$.
\end{proof}

We now proceed with the description of measurement update rules.
After the single-qubit measurement $M$ all stabilizer combinations containing $a$ become trivial, since $\abs{\qsupp{K^\prime_a}} = 1$, and so all $S^\prime_c$ with $a \in c$ are removed.
Next, using Lemma \ref{apx:lems:new_non-triv_post-mnt} we show that for any single-qubit Pauli measurement $M$ then all new non-trivial stabilizers are contained within the set $\Gamma = \{S_c: a \not\in c, S_{c \cup \{a\}} \in S_{\Psi}^{\mathrm{NT}}\}$.
Since all previously non-trivial $S_{c \cup \{a\}}\in S_{\Psi}^{\mathrm{NT}}$ stabilizers are known, $\Gamma$ is easily found.
However, not all stabilizers in $\Gamma$, nor the remaining updated stabilizers $\Delta = \{S_c^\prime : S_c \in S_\Psi^{\mathrm{NT}}\} \setminus \{S_c^\prime : a \in c\}$, are necessarily still non-trivial.
But since $\mathcal{S}_{\Psi^\prime}^{\mathrm{NT}} \subseteq \Gamma \cup \Delta$, all trivial stabilizers can identified and removed using the batch triviality test described in section \ref{apx:SPF:triv_testing:stabs}.

Finally, we note that in some cases measurement $M$ can cause the updated generators to be trivial in a manner not captured above.
For example, in the simplest case, this can occur when performing $M$ also performs the indirect single-qubit measurement $M^\prime$.
However, this can be easily identified as a non-trivial replacement for the trivial generator will always be contained within $S_\Psi^{\textrm{NT}}$.
Once any trivial generators are identified, the process described in section \ref{apx:SPF:triv_testing:gens} can then be applied to return $\mathcal{G}_\Psi$ to it's proper form.

To summarise, $\mathcal{S}_{\Psi^\prime}^{\mathrm{NT}}$ is updated by applying the following steps:
\begin{enumerate}
	\item Remove from $\mathcal{S}_{\Psi\prime}^{\textrm{NT}}$ any $S_c$ with $a \in c$, but keeping them in memory.
	\item Using the discarded stabilizers, find and add the set of potential new non-trivial stabilizers $\{S_c : S_{c \cup \{a\}} \in \mathcal{S}_\Psi^{\mathrm{NT}}, S_c \notin \mathcal{S}_\Psi^{\mathrm{NT}}\}$.
	\item Apply a batch triviality test to $\mathcal{S}_{\Psi^\prime}^{\mathrm{NT}}$ to remove any trivial stabilizers.
	\item Test generators for triviality and update if required.
\end{enumerate}

\subsection{SPF Part 2: Finding loss-tolerant measurement patterns} \label{apx:SPF:algorithm_p2}

%\paragraph{Introduce measurement-pattern finding algorithm.} \

We now consider the task of using the set of non-trivial stabilizers to identify a set of measurement patterns that teleport from $I$ to $O$, as outlined in section \ref{sec:SPF:finding_mnt_pats}.
In this case, each non-trivial stabilizer $S_a \in \mathcal{S}_\Psi^{\mathrm{NT}}$ represents three possible anti-commuting logical operators, namely $S_c\bar{X}$, $S_c\bar{Y}$ and $S_c\bar{Z}$.
For each valid teleportation measurement pattern, this logical operator must then be combined with another represented by a second stabilizer $S_d$ to satisfy equation \eqref{eq:motivation:log_op_pairs}.
For states of significant size and complexity, this presents a large number of possible measurement patterns, which are prohibitively expensive to calculate and validate.
However, due to requirement for achieving maximal loss tolerance, our algorithm is specifically concerned with measurement patterns that have minimal qubit weight, and only needs to consider a restricted subset of all patterns.
We shall now present an algorithm that finds all measurement patterns below a certain qubit weight given $\mathcal{S}_\Psi^{\mathrm{NT}}$, and hence the set of maximally loss-tolerant measurement patterns\footnote
{
	Strictly speaking, in certain cases logical operators can also be produced by combining multiple non-trivial stabilizers, for example if $S_c\bar{X}$ is a logical operator with support on $O$, but some other stabilizer exists where
	\begin{align}
		\qsupp{S_\alpha} \cap \qsupp{\bar{X}} \neq \emptyset \quad\textrm{but}\quad \qsupp{S_\alpha} \cap \qsupp{S_c} = \emptyset.
	\end{align}
	In this case, while $S_cS_\alpha \bar{X}$ does represent a valid logical operator, it is less loss-tolerant that $S_c \bar{X}$ since in almost all reasonable cases $\abs{\qsupp{S_cS_\alpha \bar{X}}} > \abs{\qsupp{S_c\bar{X}}}$.
	However, for the states in where these combinations do improve loss tolerance, they can be straightforwardly included.
	Alternatively, if a virtual qubit $I$ is being utilised by entangling it with some set qubits $\mathcal{I}$, it can be assumed that only those qubits within $\mathcal{I} \cap \qsupp{S_c}$ are initially entangled with $I$ so as to minimise any unnecessary measurements by removing any of the above cases.
	For these reasons we can safely omit these logical operators from our consideration.
}.

%\paragraph{Finding minimum weight measurement patterns.} \

Following equation \eqref{eq:motivation:log_op_pairs}, the set of all measurement patterns $\mathcal{M}$ that achieve the desired teleportation is
\begin{align}
	\mathcal{M} = \{M_{\bar{L}_i, \bar{L}_j} : \acomm{\bar{L}_i^{[O]}}{\bar{L}_j^{[O]}} = 0,  \comm{\bar{L}_i^{[a]}}{\bar{L}_j^{[a]}} = 0, \;\forall\; a \neq O\}
\end{align}
Given that $\abs{M_{\bar{L}_i, \bar{L}_j}} \geq \max(\abs{\qsupp{\bar{L}_i}}, \abs{\qsupp{\bar{L}_j}}) - 1$, the set $\mathcal{M}_w$ of measurement patterns with weight $w$ is a subset of all $M_{\bar{L}_i, \bar{L}_j}$ produced by logical operators with at most weight $w + 1$, such that
\begin{align}
	\mathcal{M}_w \subset \{ M_{\bar{L}_i, \bar{L}_j} : \abs{\qsupp{\bar{L}_i}}, \abs{\qsupp{\bar{L}_j}} \leq w + 1\}.
\end{align}
Note that it is necessarily true that $\mathcal{M}_v \subseteq \mathcal{M}_w$ for $v < w$, and also that any measurement pattern that does not contain (i.e.\ is loss-tolerant to) a certain set of qubits, is also loss-tolerant to any subset of those qubits.
Hence, many loss-tolerant configurations on an $n$-qubit state can be found from combining logical operators with weight $w \ll n$.
In practise, $w$ can be increased until no additional loss tolerance is found, thereby providing all possible loss-tolerant configurations without the need for an exhaustive search\footnote
{
	This is specifically true for heralded loss.
	For unheralded loss, increasing $w$ may be beneficial in order to identify a greater variety of measurement patterns with the same degree of loss tolerance, thereby increasing the number of measurement patterns available after a given qubit measurement should it fail.
}.

\afterpage{
\begin{figure}[H]
	\begin{algorithm}[H]
		\SetAlgoLined
		\DontPrintSemicolon
		\SetKwProg{obj}{Object}{}{}
		\SetKwProg{meth}{Method}{}{}
		\SetKwProg{meth}{Method}{}{}
		\SetKw{KwLoopIn}{in}
		\obj{$\textsc{StabilizerState}$}
		{
			\vspace{5pt}
			\meth{$\textsc{InitialiseState}$}
			{
				$Q = \{I\}$ \tcp*[r]{Creates qubit register $Q$}
				$\bar{X} \leftarrow X_I, \; \bar{Z} \leftarrow Z_I$ \tcp*[r]{Creates logical operators}
				$\mathcal{G} \leftarrow \emptyset, \; \mathcal{S} \leftarrow \emptyset$ \tcp*[r]{Creates generators $\mathcal{G}$ and stabilizer combos $\mathcal{S}$}
			}
			\vspace{5pt}
			\meth{$\textsc{AddQubit}(q)$}
			{
				$Q \leftarrow Q \cup \{q\}$ \tcp*[r]{Appends qubit to register}
				$\bar{X} \leftarrow \bar{X} \otimes \id_q, \; \bar{Z} \leftarrow \bar{Z} \otimes \id_q$ \tcp*[r]{Extends logical ops.}
				$\mathcal{G} \leftarrow \{K_i \otimes \id_q : K_i \in \mathcal{G}\}, \;\; \mathcal{S} \leftarrow \{S_c \otimes \id_q : S_c \in \mathcal{S}\}$ \tcp*[r]{Extends gens.\ and combo stabs.}
			}
			\vspace{5pt}
			\meth{$\textsc{ApplyQubitUnitary}(q, U)$}
			{
				$\bar{X} \leftarrow U\bar{X}U^\dag, \; \bar{Z} \leftarrow U\bar{Z}U^\dag$ \tcp*[r]{Updates logical operators}
				$\mathcal{G} \leftarrow \{UK_iU^\dag : K_i \in \mathcal{G}\}, \;\; \mathcal{S} \leftarrow \{US_cU^\dag : S_c \in \mathcal{S}\}$ \tcp*[r]{Updates gens.\ and combo stabs.}
			}
			\vspace{5pt}
			\meth{$\textsc{ApplyCZ}(u, v)$}
			{	
				$\bar{X} \leftarrow \CZ_{uv}\bar{X}\CZ_{uv}^\dag, \; \bar{Z} \leftarrow \CZ_{uv}\bar{Z}\CZ_{uv}^\dag$ \tcp*[r]{Updates logical operators}
				$\mathcal{G}\; \leftarrow \{\CZ_{uv} K_i \CZ_{uv}^\dag : K_i \in \mathcal{G}\}$ \tcp*[r]{Updates gens.}
				$\mathcal{S}^\prime \leftarrow \{\CZ_{uv} S_c \CZ_{uv}^\dag : S_c \in \mathcal{S},\; S_c^{[u]} \otimes S_c^{[v]} \neq \id_u \otimes \id_v\}$ \tcp*[r]{Finds post-CZ combo stabs.}
				\For{$S_c^\prime$ \KwLoopIn $\mathcal{S}^\prime$}
				{
					\If{$\textsc{IsTrivial}(S_c^\prime)$}
					{
						$\mathcal{S}^\prime \leftarrow \mathcal{S}^\prime \setminus \{S_c^\prime\}$ \tcp*[r]{Removes trivial combo stabs.}
					}
				}
				\For{$S_c^\prime, S_d^\prime$ \KwLoopIn $\mathcal{S}^\prime$}
				{
					\If{$\qsupp{S_c} \cap \qsupp{S_d} = \emptyset \bigwedge \qsupp{S_c^\prime} \cap \qsupp{S_d^\prime} \subseteq \{u, v\}$}
					{
						$\mathcal{S}^\prime \leftarrow \mathcal{S}^\prime \cup \{S_c^\prime \cdot S_d^\prime\}$ \tcp*[r]{Adds newly non-trivial combo stabs.}
					}
				}
				$\mathcal{S}^\ast \leftarrow \{S_c^\prime : S_c \in \mathcal{S}, \qsupp{S_c^\prime} \cap \{u, v\} \neq \emptyset\}$ \;
				$\mathcal{S} \leftarrow \mathcal{S}^\prime \setminus \textsc{FindTrivStabs}(\mathcal{S}^\ast)$ \tcp*[r]{Removes triv.\ stabs.\ found in batch triv.\ test}
			}
			\vspace{5pt}
			\meth{$\textsc{MeasureQubit}(q, M)$}
			{
				$K_a \leftarrow \textsc{FirstAntiCommGen}(\mathcal{G}, M)$ \tcp*[r]{Picks anti-comm.\ gen.\ to remove}
				$\mathcal{G}^\prime \leftarrow \mathcal{G} \setminus \{K_a\}$ \tcp*[r]{Updates gens.}
				\For{$K_i^\prime$ \KwLoopIn $\mathcal{G}^\prime \setminus \{M\}$}
				{
					\If{$\acomm{K_i^\prime}{M} = 0$}
					{
						$b \leftarrow \textsc{Bool}(\qsupp{MK_aK_i} \cap \qsupp{M} = \emptyset)$ \tcp*[r]{$\textsc{Bool(True) = 1}, \textsc{Bool(False) = 0}$}
						$K_i^\prime \leftarrow M^bK_aK_i$
					}
					\ElseIf{$\comm{K_i^\prime}{M} = 0$}
					{
						$b \leftarrow \textsc{Bool}(\qsupp{MK_i} \cap \qsupp{M} = \emptyset)$ \;
						$K_i^\prime \leftarrow M^bK_i$
					}
				}
				$\mathcal{S}^\prime \leftarrow \mathcal{S} \setminus \{S_c : S_c \in \mathcal{S}, a \in c\}$ \tcp*[r]{Updates combo stabs.}
				$A \leftarrow \{i: K_i^\prime \in \mathcal{G}, K_i^\prime = K_aK_i\} \,\cup\, \{i: K_i^\prime \in \mathcal{G}, K_i^\prime = MK_aK_i\},$ \;
				$B \leftarrow \{i: K_i^\prime \in \mathcal{G}, K_i^\prime = MK_i\} \,\cup\, \{i: K_i^\prime \in \mathcal{G}, K_i^\prime = MK_aK_i\}$ \;
				$\mathcal{S}^\prime = \{M^{\abs{c\,\cap\, (B \cup \{a\})}}K_a^{\abs{c\,\cap\, A}} S_{c\setminus \{a\}} : S_c \in \mathcal{S}^\prime\}$ \;
				$\mathcal{S}^\prime \leftarrow \mathcal{S}^\prime \cup \{S_c : S_{c \cup \{a\}} \in \mathcal{S}, S_c \notin \mathcal{S}\}$ \;
				$\mathcal{S} \leftarrow \mathcal{S}^\prime \setminus \textsc{FindTrivStabs}(\mathcal{S}^\prime)$ \;
				$\mathcal{G} \leftarrow \textsc{DetrivialiseGens}(\mathcal{G^\prime} \cup \{M\})$ \tcp*[r]{Detrivs.\ any trivialised gens.} 
			}
			\vspace{5pt}
			\meth{$\textsc{FindMeasurementPatterns}(O, w)$}
			{
				$\mathcal{L}_w \leftarrow \{S_c\bar{L} : S_c \in \mathcal{S}, \bar{L} \in \{\bar{X}, \bar{Z}, \bar{Y}\}, \abs{\qsupp{\bar{L}}} \leq w + 1\}$ \tcp*[r]{Get low-weight logical ops.}
				$\mathcal{M}_w \leftarrow \{M_{\bar{L}_i, \bar{L}_j} : \bar{L}_i, \bar{L}_j \in \mathcal{L}, \acomm{\bar{L}_i^{[O]}}{\bar{L}_j^{[O]}} = 0,  \comm{\bar{L}_i^{[a]}}{\bar{L}_j^{[a]}} = 0 \;\forall\; a \neq O\}$ \tcp*[r]{Find mnt.\ pats.}
				\KwRet $\mathcal{M}_w$
			}
			\vspace{5pt}
		}
		\caption{The $\textsc{StabilizerState}$ object used to implement stabilizer pathfinding.}
		\label{alg:SPF}
	\end{algorithm}
\end{figure}
\clearpage
}

\section{Further results} \label{apx:results}

\subsection{Unheralded loss tolerance for smaller lattice channels} \label{apx:results:UH_thresholds}

Figure \ref{fig:H_vs_UH_channel_thresholds} depicts the comparison between heralded SPF and unheralded SPF for $2 \times 2$, $3 \times 3$ and $4 \times 4$ lattice channels.
In the unheralded case, no threshold crossing is observed and the unheralded teleportation rate is found to decrease with increasing lattice size.
Unlike the heralded loss case, for unheralded loss no clear threshold crossing point is observed in the teleportation rate of different sized lattices.
As with the heralded case, the small size of lattice investigated means that these results are not conclusive.
However, unlike the performance of heralded GPF, in this case the form of $T(p_l)$ remains sigmoidal, and so it is unclear whether such results indicate no threshold exists, or whether it exists but at $p_l \approx 0$.
Given that $p^*_l = 1$ for unheralded loss on the crazy graph, we therefore conjecture that unheralded thresholds do exist (even if they occur at $p_l^* = 0$ on some lattices).
If true, this suggests it may be possible to achieve $p^*_l > 0$ using larger lattices, improved measurement strategies or some lattice structure not considered here.

One possible hypothesis is that all lattice channels exhibit a threshold in the heralded SPF case, but suffer a drop in threshold when loss is unheralded, such that for all the cases considered $p_l^* \rightarrow 0$.
To assess this hypothesis it may be possible to find some lattice channel with a high $p_l^\ast$ in the heralded case with $p_l^* > 0$ when loss is unheralded.
Alternatively, one could attempt to tune between both thresholds (or between threshold and non-threshold behaviour in the case of the null-hypothesis) by simulating intermediately heralded loss where only some fraction of loss is unheralded.

Regardless of whether an unheralded threshold exists or not, it is perhaps unsurprising that SPF under unheralded loss exhibits different behaviour than the heralded case.
In analogy with a quantum error correction protocol consisting of distinct detectability (identifying erroneous qubits) and correctability (calculating some correction operator to apply) substages, SPF under heralded loss has a trivial detectability stage followed by a correctability problem solved over the global state, for which we similarly find a threshold.
On the other hand, when loss is unheralded one cannot separate detectability and correctability into different problems, but rather SPF must solve them simultaneously and with only partial, time-ordered knowledge of the state.
In this case it is therefore not surprising if the phenomena of the heralded case cannot be straightforwardly recovered.
Further study is therefore required to fully understand the differences and similarities of these cases.

\begin{figure}[t]
	\centering
	\vspace{0pt}
	\includegraphics[width=\textwidth]{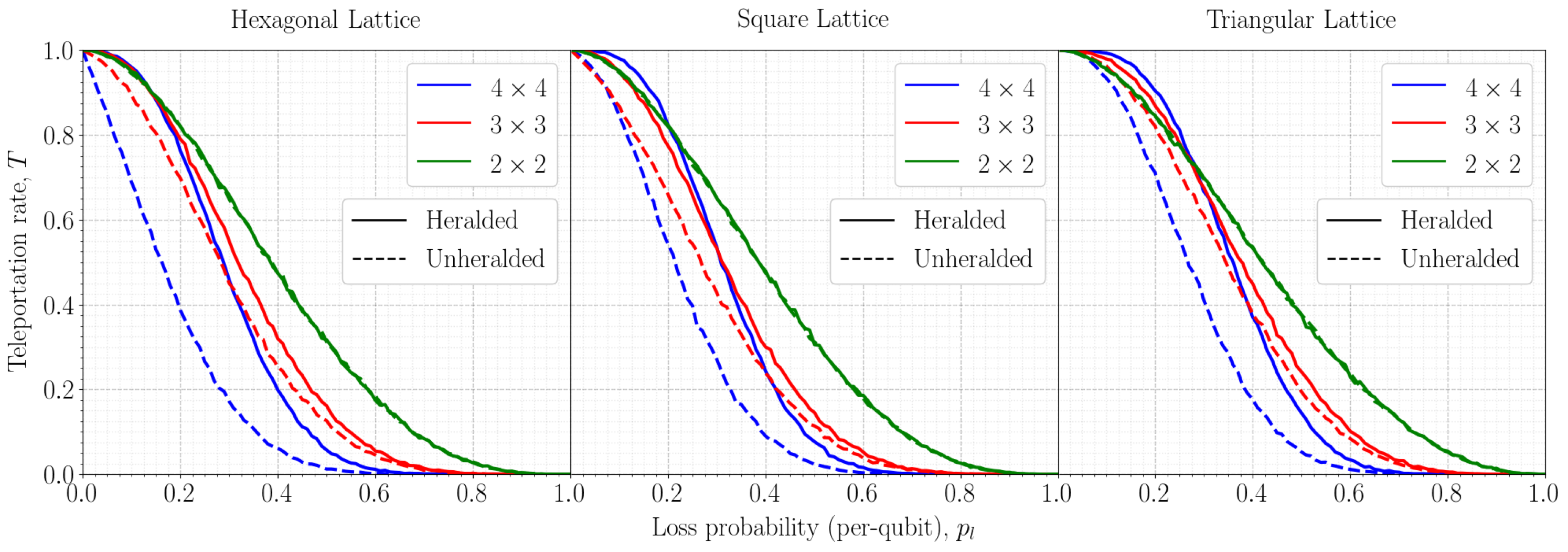}
	\vspace{-15pt}
	\caption{% H vs UH Channel thresholds
		Comparison between loss tolerance threshold behaviour for stabilizer pathfinding in the unheralded and heralded regimes for each lattice channel considered (with the ``max-tolerance'' measurement strategy applied for unheralded loss).
	}
	\label{fig:H_vs_UH_channel_thresholds}
	\vspace{0pt}
\end{figure}

\subsection{Measurement strategies} \label{apx:results:mnt_strategies}

\afterpage{
\begin{box_fig}[H]
\begin{framed}
	{\bf UNHERALDED LOSS MEASUREMENT STRATEGY:} \vspace{0pt}
	\begin{enumerate}[label=\textbf{\arabic*}),leftmargin=*]
		\itemsep0em 
	  	\item {\bf Initialise the set of performed measurements as empty $\widetilde{M} = \emptyset$ and the set of available patterns to the set of valid measurement patterns $\widetilde{\mathcal{M}} = \mathcal{M}$.}
	  	\item {\bf Identify some subset of available measurement patterns $\widetilde{\mathcal{M}}^* \subseteq \widetilde{\mathcal{M}}$.}
		\item {\bf Attempt the most common single-qubit measurement $P_i$ across all $M \in \widetilde{\mathcal{M}}^*$.} \\
			In the case of multiple such $P_i$, pick one at random.
		\begin{enumerate}[label=\textbf{\alph*}), leftmargin=*]
			\item {\bf If measurement $P_i$ succeeds (i.e.\ qubit $i$ is not lost), discard all measurement patterns that are no longer available.
			       If no patterns remain, teleportation succeeds.} \\
				Such that $\widetilde{M} \mapsto \widetilde{M} \cup \{P_i\}$ and $\widetilde{\mathcal{M}} \mapsto \{M : P_i \in M \;\forall\; M \in \widetilde{\mathcal{M}}\}$.
			\item {\bf If measurement $P_i$ fails (i.e. qubit $i$ is lost), discard all measurement patterns that contain any measurement on qubit $i$.
			       If no patterns remain, teleportation fails.} \\
				Such that $\widetilde{\mathcal{M}} \mapsto \{M : X_i, Y_i, Z_i \notin M, \;\forall\; M \in \widetilde{\mathcal{M}}\}$.	
		\end{enumerate}
		\item {\bf Repeat from 2).}
	\end{enumerate}
	\vspace{-10pt}
\end{framed}
\vspace{-5pt}
\caption{Algorithm for finding teleportation measurement patterns in case of unheralded loss. (}
\label{box:unheralded_mnt_strategy}
\end{box_fig}
\vspace{0pt}

\begin{figure}[H]
	\centering
	\vspace{0pt}
	\includegraphics[width=\textwidth]{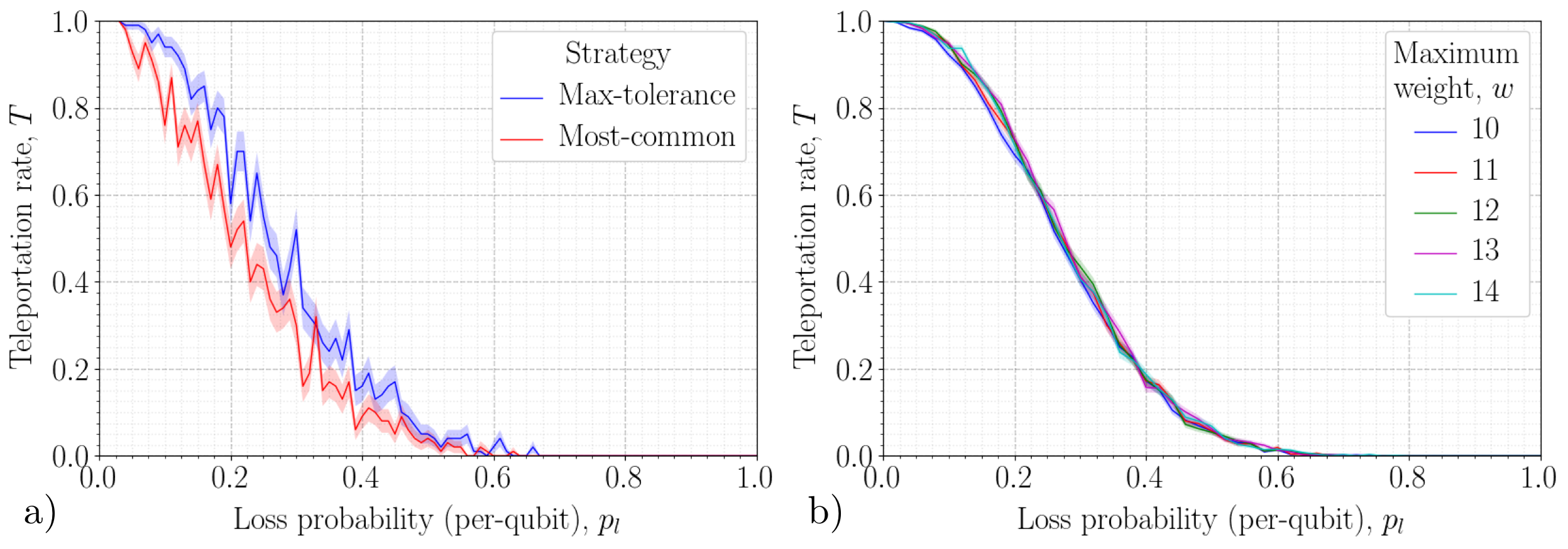}
	\vspace{-15pt}
	\caption{% Max weight loss tol
		a)
		Teleportation rates under unheralded loss for the most-common and max tolerance strategies on the $4 \times 4$ triangular lattice.
		Each data point depicts the average teleportation rate over 100 Monte Carlo instances and error bars are plotted at one standard deviation.
		Measurement patterns were found from pairs of logical operators with at most five weight greater than the minimum.
		Such results show that the prioritisation of the lowest-weight measurement patterns is preferred for unheralded loss over a greater selection of possible measurements.
		We note that the most-common strategy requires approximately $6\times$ longer to compute given a greater number of measurement patterns must be considered.
		b)
		Teleportation rates for stabilizer pathfinding on the $4 \times 4$ triangular lattice with unheralded loss depicted across various measurement pattern maximum weights.
		Recall that $\mathcal{M}_w$ found to a higher maximum weight $w$ will contain increasingly more measurement patterns, although with decreasing loss tolerance.
		Each data point depicts the average value for 1000 Monte Carlo instances and shaded error regions are depicted for a single standard deviation.
		While a small increase in loss tolerance is achieved by an increase above the minimum logical operator weight $w = 10$ to $w = 11$, above this any advantage is marginal at best.
		Such results support the conjecture that the majority of \emph{useful} loss tolerance is provided by a small selection of maximally loss-tolerant measurement patterns produced by $\mathcal{M}_w$ with the minimum $w$.
	}
	\label{apx:fig:max_weight_loss_tol}
	\vspace{0pt}
\end{figure}
\clearpage
}

A general measurement strategy algorithm for teleportation with unheralded loss is given in box \ref{box:unheralded_mnt_strategy}.
Based on the particular choice of $\widetilde{\mathcal{M}}^*$, we present two possible measurement strategies and compare their ability to tolerate unheralded loss.
The \emph{most-common} strategy performs the measurement that occurs most in \emph{all} available measurement patterns, such that $\widetilde{\mathcal{M}}^* = \widetilde{\mathcal{M}}$, whereas the \emph{max tolerance} strategy performs the measurement that occurs most in the \emph{most loss-tolerant} available measurement patterns such that $\widetilde{\mathcal{M}}^* = \{M : \abs{M} = w,\;\forall\; M \in \widetilde{\mathcal{M}}\}$, where $w$ is the minimum measurement pattern weight taken over all $M \in \widetilde{\mathcal{M}}$.

Figure \ref{apx:fig:max_weight_loss_tol}a) compares the performance of the two strategies on a $4 \times 4$ triangular lattice.
Figure \ref{apx:fig:max_weight_loss_tol}b) depicts the performance of the max-tolerance strategy with access to a greater number of measurement patterns as produced by pairs of logical operators with greater maximum weight.

\subsection{Algorithm efficiency} \label{apx:results:time_scaling}

Figure \ref{apx:fig:build_time_scaling} depicts the average computational runtime for building $\ket{\Psi}$ and the finding of associated measurement patterns $\mathcal{M}$ using the algorithms described in Appendix \ref{apx:SPF}.

For building $\ket{\Psi}$, algorithm runtime is primarily a factor of the number of non-trivial stabilizers for the state $\abs{\mathcal{S}_\Psi^{\textrm{NT}}}$.
From this it is easy to see that as $m$ rises, so does the multiplicity of possible generator combinations, with increased $n$ further providing additional qubits to distribute support among.
However, here we observe a drop in build runtime occurs for the highest $m$ when $n \geq 10$.
We conjecture this phenomena is explained by noting that any stabilizer produced from an even number $k$ of generators (where $K_i = X_i \bigotimes_{i \neq j} Z_j$) is trivial for $k>2$ as $K_iK_j = Y_iY_j$ on the completely connected graph of $n$ vertices $K_n$ (given the conventional choice of $\mathcal{G}_\Psi$ for graph states).
Each $k$-clique (i.e.\ $k$-node complete subgraph) within the graph will also have this property.
Hence, as $G_{n,m}$ approaches $K_n$, the number of and size of cliques increases, hence decreasing the number of non-trivial $S_c$ for even $\abs{c} > 2$.

For finding $\mathcal{M}$, we conversely observe an increase in runtime for the most connected graphs.
In this case, the number of near-minimum weight logical operators is the primary factor in the algorithm's runtime.
If the previous conjecture holds, this may also explain the observed increase in runtime here.
As connectivity increases, the number of low-weight $S_c$ associated with cliques also rises and so the number of low-weight logical operators would be expected to increase and hence so to the possible pairings tested for $\mathcal{M}$.
For graphs with near-maximum $m$ it may therefore be sufficient to reduce a search of logical operator pairs to those with absolutely minimal weight.

Finally, we note that the runtime in building $\ket{\Psi}$ depends to some extent on the construction order of the edges.
For example, on the lattices considered in section \ref{sec:loss_tol}, building edges within a vertical layer before building edge between layers was found to be decrease runtime.
While a deep analysis of such optimal construction strategies is beyond the scope of this paper, we conjecture that construction techniques that build highly connected subgraphs first (which are later connected) is preferred to sequentially adding edges to a single growing component.

\begin{figure}[t]
	\centering
	\vspace{-10pt}
	\includegraphics[width=0.67\textwidth]{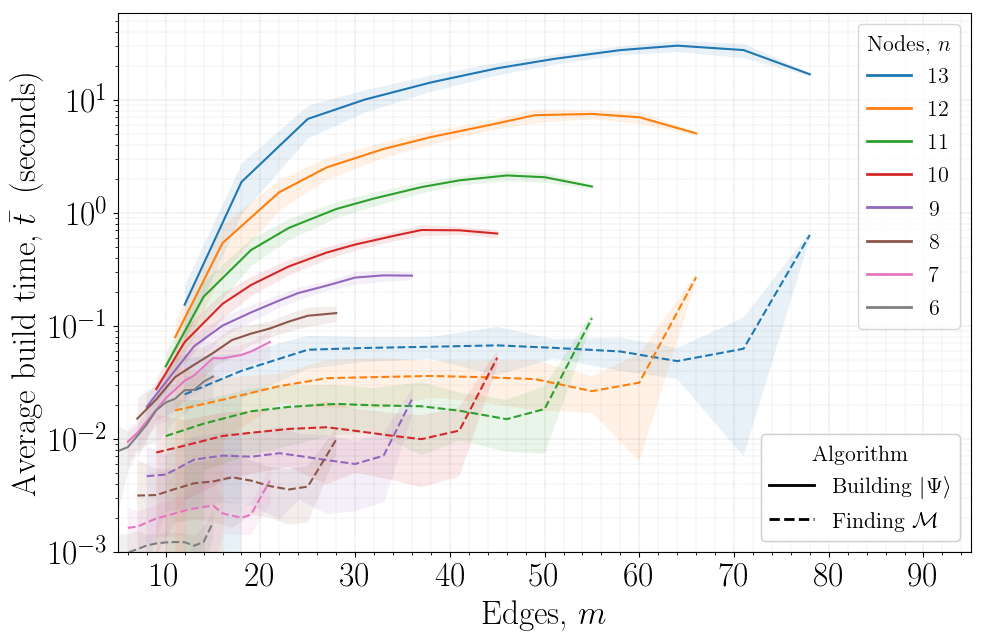}
	\vspace{-5pt}
	\caption{% Time trial data
		Average algorithm runtime for building states, i.e. finding and updating all non-trivial stabilizers (solid lines), and finding measurement patterns from said stabilizers (dashed line).
		Each data point depicts the average run-time for the given algorithm applied to teleportation across 1000 instances of random $n$-qubit graph states with measurement patterns only found from pairs of logical operators with a weight at most three greater than the minimum.
		Specifically, each graph generated is an instance of an Erd\H{o}s-Renyi $G_{n,m}$ random connected graph with $n$ nodes, $m$ edges and $n - 1 \leq m \leq n (n - 1) / 2 - 1$ (also ensuring no edge connecting input qubit $I$ and output $O$).
		Simulations were performed with Python using a standard PC running a 2.8 GHz Intel Core i7 CPU with 16GB of RAM and leveraging a NVIDIA GeForce GT 750M graphics card for GPU processing.
	}
	\label{apx:fig:build_time_scaling}
	\vspace{-5pt}
\end{figure}

\subsection{Configuration loss tolerance} \label{apx:results:config_loss_tol}
\begin{figure}[H]
	\centering
	\vspace{-10pt}
	\includegraphics[width=0.955\textwidth]{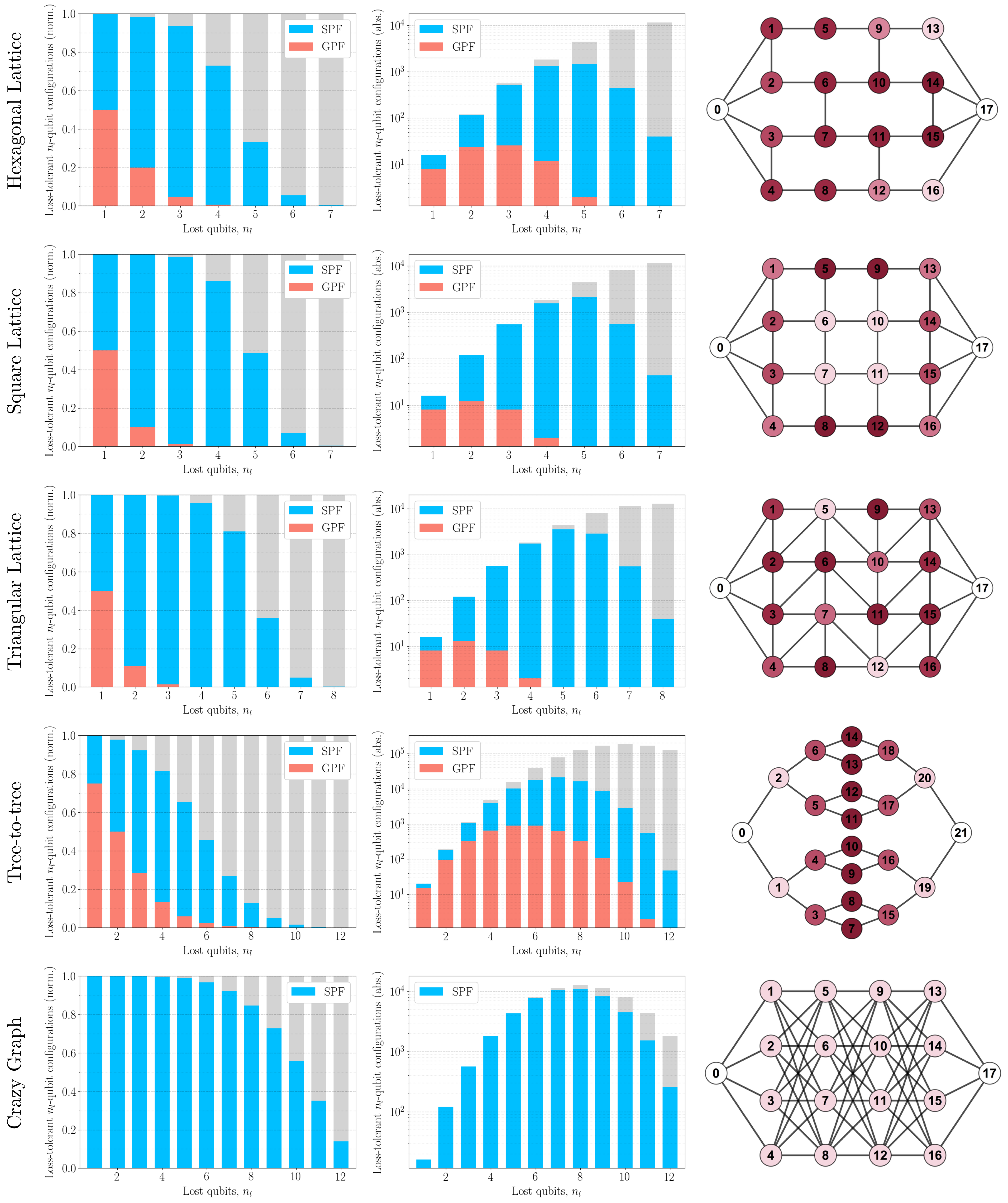}
	\vspace{-5pt}
	\caption{% Config loss tolerance
		The proportion (left) and absolute number (center) of $n_l$-qubit loss configurations tolerable for each considered channel (right) with both GPF (blue) and SPF (red) compared to the total number of $n_l$ qubit configurations (grey).
		Note that the scales of the right-hand plots are logarithmic.
		The total number of $n_l$-qubit configurations for each $n_l$ is shown in grey, given by $\binom{N}{n_l}$, where $N$ is the total number of channel qubits (excluding input and output qubits, which are assumed to be lossless).
		For crazy graph, no GPF loss tolerance exists, and so such data points are omitted.}
	\label{apx:fig:config_loss_tol}
	\vspace{-15pt}
\end{figure}

\end{appendices}

\end{document}